\newcommand{\R}{\mathbb{R}}
\newcommand{\N}{\mathbb{N}}
\newcommand{\C}{\mathbb{C}}
\newcommand{\rd}{\,\mathrm{d}}
\newcommand{\dr}[1]{\mathrm{d}{#1}\, }
\newcommand{\ri}{\mathrm{i}}
\newcommand{\im}{\operatorname{Im}}
\let\epsilon\varepsilon
\newtheorem{theorem}{Theorem}[section]
\newtheorem{lemma}[theorem]{Lemma}
\newtheorem{definition}[theorem]{Definition}
\newtheorem{corollary}[theorem]{Corollary}
\newtheorem{remark}[theorem]{Remark}
\title[Dispersive Estimates for NLS with External Potentials]{Dispersive Estimates for Nonlinear Schrödinger Equations with External Potentials}
\author{Charlotte Dietze}
\address{Department of Mathematics, LMU Munich, Theresienstr. 39, 80333 Munich, Germany}
\email{dietze@math.lmu.de}
\date{\today}
\subjclass[2020]{35Q55}
\begin{document}
\maketitle
\begin{abstract}
We consider the long time dynamics of nonlinear Schrödinger equations with an external potential. More precisely, we look at Hartree type equations in three or higher dimensions with small initial data. We prove an optimal decay estimate, which is comparable to the decay of free solutions. Our proof relies on good control on a high Sobolev norm of the solution to estimate the terms in Duhamel's formula.
\end{abstract}
\tableofcontents
\section{Introduction}
Nonlinear Schrödinger equations are of great interest in physics and mathematics, see \cite{malomedinscott, cazenave, dauxoispeyrard} for an overview. They are used to model waves on the surface of a deep fluid, see \cite{zakharov}, to describe Langmuir waves in plasma physics, see \cite{goldman}, and they are used in nonlinear optics, see \cite{schneider}. Furthermore, these equations describe intramolecular vibrations in $\alpha$-helices in proteins, see \cite{davydov3, dauxoispeyrard}. The condensate in Bose-Einstein condensation can also be described by nonlinear Schrödinger equations via the mean-field approximation, which are usually called \emph{Hartree} or \emph{Gross-Pitaevskii equations}, see \cite{gross, pitaevskii, leggett, pomeaurica}. 
\\ \\
We consider the nonlinear Schrödinger equation with a Hartree type nonlinearity
\begin{equation} \label{eq:Hartree type equation with external potential introduction}
\begin{cases} 
\ri\partial_t u & =-\Delta u+ Vu+(w*|u|^2)u\\
u(0) & =u_0
\end{cases}
\end{equation}
in dimension $d\ge 3$ with an external potential $V:\R^d\to\R$ and an interaction potential $w:\R^d\to\R$. We call this equation a \emph{Hartree type equation}. For $V,w$ nice enough and small initial data $u_0$, we show the decay estimate 
\begin{equation}\label{eq:general decay estimate with 1+t}
||u(t)||_\infty\le \frac{C}{(1+|t|)^{\frac{d}{2}}}
\end{equation}
for a constant $C>0$, where $C$ depends on $w$ only in terms of $||w||_1$. For $p\in [1,\infty]$, we denote by $||\cdot||_p$ the $L^p$ norm on $\R^d$. Such a decay estimate was proved in \cite[Corollary 3.4]{gm} for $d=3$, $V=0$ and even large initial data. The decay estimate from \cite{gm} was used in \cite{gm,namnapiorkowski} to understand the dynamics of many-body quantum systems in the context of Bose-Einstein condensation in dimension $d=3$ without an external potential. It should be possible to use our decay estimate to get similar results in the weak coupling regime in dimension $d\ge 3$ with an external potential $V$.
\\ \\
{\bf Linear equation.}
To get a better understanding why we might expect a decay estimate of the form \eqref{eq:general decay estimate with 1+t} for nonlinear Schrödinger equations under certain conditions, let us look at the linear Schrödinger equation
\begin{equation}
\begin{cases}
\ri\partial_t u & =-\Delta u\\
u(0) & =u_0 
\end{cases}
\end{equation}
with initial data $u_0\in L^1(\R^d)\cap L^2(\R^d)$. The solution to this equation, see for example \cite[Equation (4.2)]{nd}, is given by
\begin{equation}
u(t,x)=\left(e^{-\ri t(-\Delta)}u_0\right)(x)=\frac{1}{(4\pi\ri t)^{\frac{d}{2}}}\int_{\R^d}\dr y e^{\frac{\ri |x-y|^2}{4t}}u_0(y)\, . 
\end{equation}
By taking the $L^\infty$ norm on both sides, we obtain the decay estimate
\begin{equation}\label{eq:dispersive estimate for e^(it Laplace)}
||u(t)||_\infty\le\frac{1}{(4\pi)^{\frac{d}{2}}}\frac{1}{|t|^{\frac{d}{2}}}||u_0||_1\, . 
\end{equation}
This decay rate agrees with the decay rate in equation \eqref{eq:general decay estimate with 1+t} for large times $t$. Note that the estimate \eqref{eq:general decay estimate with 1+t} is even stronger than \eqref{eq:dispersive estimate for e^(it Laplace)} for small times $t$, which will follow from our assumptions on the initial data. 
\\ \\
{\bf Nonlinear Schrödinger equations without external potentials.}
The equation \eqref{eq:Hartree type equation with external potential introduction} with $V=0$, namely 
\begin{equation}
\begin{cases} 
\ri\partial_t u & =-\Delta u+(w*|u|^2)u\\
u(0) & =u_0
\end{cases}
\end{equation}
has been studied extensively in the literature. There are many results on local or global existence and uniqueness, scattering, modified scattering and wave operators for these equations. Strauss studied scattering theory in a general setting and applied it to nonlinear Schrödinger equations, see \cite{straussscattering2,straussscattering3}. For Hartree type nonlinearities with repulsive interaction potential $w$, Ginibre and Velo proved the decay estimate
\begin{equation}
||u(t)||_q\le C(1+|t|)^{-\frac{d}{2}\left(\frac{1}{q'}-\frac{1}{q}\right)}
\end{equation}  
for all $q$ such that $\left[\frac{1}{2}-\frac{1}{d}\right]_+\le\frac{1}{q}\le\frac{1}{2}$, see  \cite[Theorem 6.1(1)]{ginibrevelononlocal}. Note that $q=\infty$ is not allowed here if $d\ge 3$. We would also like to mention  \cite{ginibrevelotimedecaykleingordonnls, hayashitsutsumi, cazenaveweissler, ozawa, ginibreozawa, ginibreveloscatteringhartree}.
More recent results include \cite{deiftzhou, duyckaertsholmerroudenko, choozawa}.
\\ \\
Let us now discuss some results and the corresponding proof ideas, which will be important for the proof of our main result. 
\\ \\
Hayashi and Naumkin were the first to prove a decay estimate of the form (\ref{eq:general decay estimate with 1+t}) for critical nonlinearities and small initial data, see \cite{hayashinaumkin}. The nonlinearities they considered were
\begin{equation}
\lambda |u|^{\frac{2}{d}}u+\mu |u|^{\eta -1}u
\end{equation} 
in $d\in\{1,2,3\}$ with $\eta-1>\frac{2}{d}$, $\lambda, \mu\in\R$ and 
\begin{equation}
\lambda (|\cdot|^{-1}*|u|^2)u+\mu (|\cdot|^{-\delta}*|u|^2)u
\end{equation}
in $d\ge 2$ with $1<\delta <d$, $\lambda, \mu\in\R$. Moreover, they proved modified scattering for these equations. 
\\ \\
Kato and Pusateri provided an alternative proof of the result in \cite{hayashinaumkin} for the local nonlinearity $\pm |u|^2u$ in $d=1$ and the Hartree type nonlinearity $\pm (|\cdot|^{-1}*|u|^2)u$ in $d\ge 2$ in \cite{katopusateri}. They defined a quantity $||u||_{X_T}$ depending on a time parameter $T\ge 0$ and they proved an estimate of the form $||u||_{X_T}\le\epsilon+C||u||_{X_T}^3$, where $\epsilon>0$ is small and both $\epsilon, C$ are independent of $T$. They used this inequality to deduce that $\sup_{T\ge 0}||u||_{X_T}<\infty$ for small initial data. Their proof relied on a careful analysis of the equation in the Fourier space.
\\ \\
For the Hartree type equation with non-negative, spherically symmetric and decreasing $w\in L^1(\R^d)\cap C^1_0(\R^d)$, Grillakis and Machedon showed a decay estimate of the form (\ref{eq:general decay estimate with 1+t}) for initial data $u_0\in W^{k,1}(\R^d)$ for $k$ sufficiently large, see \cite[Corollary 3.4]{gm}. It is worth pointing out that their result holds for large initial data. Their result was applied in \cite{gm, namnapiorkowski} in the context of Bose-Einstein condensation to show a norm approximation for the dynamics. Another more recent result on the dynamics of many-body quantum systems, which we would like to mention, is \cite{brenneckenamnapiorkowskischlein}.
\\ \\
{\bf Nonlinear Schrödinger equations with external potentials.}
We start by looking at results in dimension $d=1$. Cuccagna, Georgiev and Visciglia proved a decay estimate of the form (\ref{eq:general decay estimate with 1+t}) and scattering for small initial data and a nonlinearity of the form $\pm |u|^{p-1}u$ for $3<p<5$, see \cite{cuccagnageorgievvisciglia}. Germain, Pusateri and Rousset considered the nonlinearity $\pm |u|^{p-1}u$ with $p=3$, see \cite{germainpusaterirousset}. They proved a decay estimate of the form (\ref{eq:general decay estimate with 1+t}) and modified scattering for small initial data. In their proof, they used the distorted Fourier transform and they carefully analysed an oscillatory integral. 
\\ \\
Naumkin considered the cubic nonlinear Schrödinger equation with an external potential and proved a decay estimate of the form (\ref{eq:general decay estimate with 1+t}) and modified scattering using the distorted Fourier transform, see \cite{naumkinpotential,naumkinexceptional}.
\\ \\
Martinez proved decay in the sense that $\lim_{n\to\infty} ||u(t)||_{L^\infty(I)}=0$ for any bounded interval $I\subset\R$ for small odd solutions $u$ to nonlinear Schrödinger equations with even external potentials $V$, see \cite{martinez}. The nonlinearities considered in \cite{martinez} are of the form $f(|u|^2)u$ for a function $f:\R\to\R$ with $|f(s)|\lesssim s^{\frac{p-1}{2}}$ for $1<p<5$ and Hartree type nonlinearities $\pm \left(|\cdot|^{-\alpha}*|u|^2\right)u$ for $0<\alpha<1$. 
\\ \\
Let us now mention some results in dimension $d=3$. Pusateri and Soffer proved a decay estimate of the form $||u(t)||_\infty\le C(1+|t|)^{-(1+\alpha)}$ for some $\alpha>0$ for the nonlinear Schrödinger equation with nonlinearity $-u^2$ and small initial data, see \cite{pusaterisoffer}. In a similar way to \cite{germainpusaterirousset}, the proof in \cite{pusaterisoffer} relies on the distorted Fourier transform and a careful analysis of an oscillatory integral.
\\ \\
Hong proved scattering in $H^1$ for the cubic focusing nonlinear Schrödinger equation with an external potential $V$ with small negative part, see \cite{hong}. Hong's proof strategy was to show that there are no minimal blow-up solutions. Nakanishi classified the dynamics of solutions to the cubic nonlinear Schrödinger equation with small initial data and a radial external potential $V$, which is such that the operator $-\Delta +V$ has exactly one negative eigenvalue, see \cite{nakanishi}.
\\ \\
There are only few decay results for nonlinear Schrödinger equations with external potentials in $d=3$. Note that the results in $d=3$, which we mentioned here, treat local nonlinearities. Nonlinear Schrödinger equations with non-local nonlinearities such as the Hartree type equation, which we consider below, are not necessarily easier to deal with.
\subsection{Main result}
Our main result is a decay estimate of the form (\ref{eq:general decay estimate with 1+t}) for the Hartree type equation with small initial data in dimension $d\ge 3$. 

\begin{theorem}[Dispersive estimates for the Hartree type equation in $d\ge 3$ for small initial data]\label{th:dispersive d>=3 small}
Let $d\ge 3$ and let $k\in\N$ be the smallest even number with $k>\frac{d}{2}$. Let $V\in W^{k,\infty}(\R^d)$ be a real-valued function and satisfy
\begin{equation}
||e^{-\ri t(-\Delta+V)}f||_\infty\le C^V |t|^{-\frac{d}{2}}||f||_1
\end{equation} 
for every $f\in L^1(\R^d)\cap L^2(\R^d)$ and some constant $C^V\ge 1$. Let the interaction potential $w\in L^1(\R^d)\cap L^{\frac{d}{2}}(\R^d)$ be an even, real-valued function. Let $u_0\in H^k(\R^d)$ and let $u\in C\left(\R , H^2(\R^d)\right)\cap C\left(\R, H^{-1}(\R^d)\right)$ be the unique global strong solution to the Hartree type equation 
\begin{equation}
\begin{cases}
\ri\partial_t u & =(-\Delta +V)u+(w*|u|^2)u\\
u(0) & =u_0
\end{cases}
\end{equation}
given by Theorem \ref{th:global H^2 solutions}. Assume that the initial data is sufficiently small, that is, 
\begin{equation}\label{eq:initial data are small d>=3 small}
||e^{\ri (-\Delta+V)}u_0||_1\, , ||u_0||_{H^k}\le\epsilon_0
\end{equation}
for some $\epsilon_0=\epsilon_0(d,V,||w||_1)>0$. Then there exists a constant $C_0=C_0(d,V,||w||_1)>0$ such that
\begin{equation}\label{eq:dispersive estimate for u main theorem}
||u(t)||_\infty\le\frac{C_0}{(1+|t|)^{\frac{d}{2}}}
\end{equation}
for all $t\ge 0$. Furthermore, if we assume that
\begin{equation}
||e^{\ri (-\Delta+V)}(\partial_t u)(0)||_1\, ,||(\partial_t u)(0)||_{H^k}\le\tilde\epsilon_0
\end{equation}
for some $\tilde\epsilon_0=\tilde \epsilon_0(d,V,||w||_1)>0$, then
\begin{equation}
||\partial_t u(t)||_\infty\le  \frac{\tilde C_0}{(1+|t|)^{\frac{d}{2}}}
\end{equation}
for all $t\ge 0$, where $\tilde C_0=\tilde C_0(d,V,||w||_1)>0$. 
\end{theorem}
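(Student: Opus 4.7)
I would close the estimate via a continuity/bootstrap argument on
\[
M(T) := \sup_{t \in [0,T]} (1+t)^{d/2} \|u(t)\|_\infty,
\]
aiming to prove $M(T) \le 2 C^V \epsilon_0$ for all $T \ge 0$. Starting from Duhamel's formula
\[
u(t) = e^{-\ri t(-\Delta+V)} u_0 - \ri \int_0^t e^{-\ri (t-s)(-\Delta+V)}(w*|u|^2)u(s) \rd s,
\]
the linear part is controlled via the group law $e^{-\ri t(-\Delta+V)} u_0 = e^{-\ri (t+1)(-\Delta+V)}\bigl(e^{\ri (-\Delta+V)}u_0\bigr)$ combined with the assumed dispersive bound, giving $\|e^{-\ri t(-\Delta+V)}u_0\|_\infty \le C^V (1+t)^{-d/2} \|e^{\ri (-\Delta+V)}u_0\|_1 \le C^V \epsilon_0 (1+t)^{-d/2}$ for $t \ge 0$. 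The bulk of the work is then to estimate the Duhamel integral.

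\textbf{Splitting the Duhamel integral.} Because $\int_0^t (t-s)^{-d/2} \rd s$ diverges at $s = t$ when $d \ge 2$, I would split the integral at $s = t - 1$. On the far piece $[0, t-1]$, I apply the $L^1 \to L^\infty$ dispersive bound together with $\|(w*|u|^2)u\|_1 \le \|w\|_1 \|u\|_2^2 \|u\|_\infty$ (via Young, H\"older and the interpolation $\|u\|_4^2 \le \|u\|_2 \|u\|_\infty$), mass conservation $\|u(s)\|_2 \le \epsilon_0$, the bootstrap hypothesis, and a further subdivision at $s = t/2$ to obtain $C(V, \|w\|_1)\, \epsilon_0^2 M(T)(1+t)^{-d/2}$, using $\int_0^\infty (1+s)^{-d/2}\rd s < \infty$ for $d \ge 3$. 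On the near piece $[t-1, t]$ the dispersive estimate is useless, so I would instead invoke the Sobolev embedding $H^k \hookrightarrow L^\infty$ (valid since $k > d/2$) together with the fact that $V \in W^{k,\infty}$ implies $\|e^{-\ri \tau(-\Delta+V)} f\|_{H^k} \le C\|f\|_{H^k}$ uniformly for $|\tau| \le 1$ via commutator/energy estimates. The Moser inequality yields $\|(w*|u|^2)u\|_{H^k} \le C \|w\|_1 \|u\|_\infty^2 \|u\|_{H^k}$, so the near contribution is bounded by $C(V, \|w\|_1) M(T)^2 (1+t)^{-d} \sup_{s \in [0,t]} \|u(s)\|_{H^k}$, which decays strictly faster than $(1+t)^{-d/2}$.

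\textbf{Sobolev propagation and closure.} It remains to control $\|u(t)\|_{H^k}$ uniformly in time. A standard commutator/energy estimate gives $\frac{\md}{\md t} \|u\|_{H^k}^2 \le C(V, \|w\|_1)\, \|u\|_\infty^2 \|u\|_{H^k}^2$, and Gr\"onwall's inequality yields
\[
\|u(t)\|_{H^k}^2 \le \|u_0\|_{H^k}^2 \exp\!\left(C \int_0^t \|u(s)\|_\infty^2 \rd s\right).
\]
Under the bootstrap, $\int_0^\infty \|u(s)\|_\infty^2 \rd s \le M(T)^2 \int_0^\infty (1+s)^{-d}\rd s < \infty$ since $d \ge 3$, so $\|u(t)\|_{H^k}$ stays bounded by a quantity depending only on $M(T)$, $\epsilon_0$, $V$ and $\|w\|_1$. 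Combining all pieces produces an inequality $M(T) \le C^V \epsilon_0 + F(M(T), \epsilon_0)$ with $F$ continuous and of cubic-or-higher order in $(M(T), \epsilon_0)$; since $M(0) \le \|u_0\|_{H^k} \le \epsilon_0$, the continuity method closes the bootstrap for $\epsilon_0$ small, yielding $M(T) \le 2 C^V \epsilon_0$.

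\textbf{Time derivative and main obstacle.} For $\partial_t u$, observe that $v := \partial_t u$ satisfies the linearised equation $\ri \partial_t v = (-\Delta+V)v + (w*|u|^2)v + 2\bigl(w*\re(u\bar v)\bigr)u$; with the pointwise decay of $u$ already in hand, the same bootstrap strategy applied to $\sup_{t\in[0,T]}(1+t)^{d/2} \|v(t)\|_\infty$ closes once $\tilde\epsilon_0$ is small. The central obstacle is the interlocked nature of the two estimates: pointwise $L^\infty$ decay is needed to propagate $H^k$ regularity via Gr\"onwall, yet the $H^k$ bound is precisely what handles the near-diagonal part of the Duhamel integral where the dispersive estimate alone diverges. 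The whole scheme depends crucially on $d \ge 3$, which makes both $\int_0^\infty (1+s)^{-d/2}\rd s$ and $\int_0^\infty (1+s)^{-d}\rd s$ finite.
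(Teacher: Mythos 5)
Your proposal follows the same architecture as the paper's proof (Duhamel, splitting at $s=t-1$, the $L^1$--$L^\infty$ dispersive bound on the far piece, Sobolev embedding plus an $H^k$-bounded propagator on the near piece, and a continuity/bootstrap closure), but there is a genuine gap in the step you use to propagate the $H^k$ norm. The claimed inequality $\frac{\md}{\md t}\|u\|_{H^k}^2\le C(V,\|w\|_1)\,\|u\|_\infty^2\|u\|_{H^k}^2$ is not correct when $V\neq 0$: writing $\partial_t u=-\ri\big[(-\Delta+V)u+(w*|u|^2)u\big]$ and computing $\frac{\md}{\md t}\|D^ku\|_2^2=2\im\langle D^ku, D^k(Vu)\rangle+2\im\langle D^ku, D^k[(w*|u|^2)u]\rangle$ (the $-\Delta$ contribution cancels), the piece $\im\langle D^ku, V D^ku\rangle$ vanishes because $V$ is real, but the commutator $[D^k,V]u$ produces a term of size $\|V\|_{W^{k,\infty}}\|u\|_{H^k}^2$ that carries no factor $\|u\|_\infty^2$ and no time decay. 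Gr\"onwall then only yields $\|u(t)\|_{H^k}\lesssim e^{Ct}\|u_0\|_{H^k}$, which is useless for the near-diagonal Duhamel piece at large times, so your bootstrap does not close as written. The repair is precisely the structural input the paper isolates: for even $k$ and $V\in W^{k,\infty}$ the norm $\sum_{j\le k/2}\|(-\Delta+V)^j\phi\|_2$ is equivalent to $\|\phi\|_{H^k}$, see \eqref{eq:equivalent Sobolev norm with V} in the proof of Lemma \ref{le:W^(k,p) dispersive estimate}; with this $H$-adapted norm the linear part drops out exactly from the energy identity (equivalently, $\|e^{-\ri tH}f\|_{H^k}\le C^{DS}\|f\|_{H^k}$ uniformly in $t$), and the paper exploits this by estimating $\sup_t\|D^ku(t)\|_2$ directly through Duhamel (Lemma \ref{le:D^k estimate for d>=3 small}), including it in the bootstrap quantity $M(T)$ and closing a cubic inequality $M(T)\le\epsilon+CM(T)^3$ via Lemma \ref{le:f>=0 two intervals}.

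Two smaller points. First, your continuity argument needs $T\mapsto M(T)$ to be finite and continuous, which requires local well-posedness in $H^k$ and a blow-up alternative (the solution in the statement is a priori only $H^2$); the paper supplies this through Theorem \ref{th:H^m solutions for m>d/2} and Lemma \ref{le:H^m boundedness and continuity of interaction part}, and you should invoke (or prove) the analogous facts. Second, for $\partial_t u$ the source terms linear in $v=\partial_t u$ come with coefficients of size $C_0^2$ (two factors of $u$), so closing that bootstrap requires $C_0$, hence $\epsilon_0$, to be chosen small in addition to $\tilde\epsilon_0$; this is how the paper arranges its constants in Part 2 of the proof, and it is not automatic from smallness of the $\partial_t u$ data alone.
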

This result is new even when $d=3$. 
\begin{remark}
A special case of Theorem \ref{th:dispersive d>=3 small} was proved by Grillakis and Machedon \cite[Corollary 3.4]{gm}, where they considered $V=0$ in dimension $d=3$. Their result even holds for large initial data. 
\end{remark}
\begin{remark}[Application in many-body quantum mechanics]\label{re:application in many-body QM}
Note that the $w$-dependence of the constants $\epsilon_0$ and $C_0$ is only in terms of $||w||_1$. For proving the norm approximation for the dynamics of many-body quantum systems in dimension $d=3$ without an external potential in \cite{gm, namnapiorkowski}, it was crucial that the constants in the decay estimate in \cite[Corollary 3.4]{gm} depended on $w$ only in terms of $||w||_1$. Using Theorem \ref{th:dispersive d>=3 small}, it should be possible to prove analogous results in the small coupling regime for the many-body Schrödinger equation with an external potential $V$ in dimension $d\ge 3$. 
\end{remark}
\begin{remark}[Dispersive estimates for $e^{-\ri t (-\Delta+V)}$]
In Theorem \ref{th:dispersive estimate for e^(-itH)}, we mention two different conditions under which a dispersive estimate of the form $||e^{-\ri t(-\Delta+V)}f||_\infty\le C^V |t|^{-\frac{d}{2}}||f||_1$ is satisfied.
\end{remark}
\begin{remark}[Extensions of Theorem \ref{th:dispersive d>=3 small}] 
\begin{itemize}
\item[(i)] \sloppy A similar decay result holds for $t\le 0$. For that case, we need to replace the condition $||e^{\ri (-\Delta+V)}u_0||_1\le\epsilon_0$ by $||e^{-\ri (-\Delta+V)}u_0||_1\le\epsilon_0$ and $||e^{\ri (-\Delta+V)}(\partial_t u)(0)||_1\le\tilde\epsilon_0$ by $||e^{-\ri (-\Delta+V)}(\partial_t u)(0)||_1\le\tilde\epsilon_0$.
\item[(ii)] In Remark \ref{re:large d>=3}, we treat the case of large initial data, which will be proved using Gronwall's lemma. However, we will need the additional assumptions that $\lim_{t\to\infty}||u(t)||_\infty=0$ and $\sup_{t\ge 0}||D^ku(t)||_\infty <\infty$. 
\end{itemize}
\end{remark}
\begin{remark}[Further questions]
\begin{itemize}
\item[(i)] It would be interesting to consider the Hartree equation with an external potential $V\neq 0$, where the interaction potential $w$ is given by $w(x)=\frac{1}{|x|}$. The proof of a decay estimate of the form \eqref{eq:dispersive estimate for u main theorem} for the Hartree equation with $V=0$ in \cite{katopusateri} relied on a careful analysis in Fourier space. Similar to \cite{pusaterisoffer}, one could try to use an approach involving a distorted Fourier transform for the Hartree equation with an external potential $V\neq 0$. 
\item[(ii)]An even more challenging problem is to understand the dynamics of solutions to the Hartree equation for large initial data with an external potential $V(x)=-\frac{Z}{|x|}$, where $Z>0$, which is closely linked to the dynamical ionisation conjecture, see \cite{lenzmannlewin}. 
\end{itemize}
\end{remark}
An analogous result to Theorem \ref{th:dispersive d>=3 small} for the cubic nonlinear Schrödinger equation is the following theorem. 
\begin{theorem}[Dispersive estimates for the cubic nonlinear Schrödinger equation for small initial data]\label{th:cubic NLS main result}
Let $d\ge 3$ and suppose that the assumptions on the external potential $V$ and the smallness assumptions on the initial data $u_0$ from Theorem \ref{th:dispersive d>=3 small} are satisfied. Let $u$ be a global solution to either the focusing or the defocusing cubic nonlinear Schrödinger equation
\begin{equation} 
\begin{cases} 
\ri\partial_t u & =-\Delta u+ Vu\pm |u|^2u\\
u(0) & =u_0 \, .
\end{cases}
\end{equation}
Then the dispersive estimates
\begin{equation}
||u(t)||_\infty\le\frac{C_0}{(1+|t|)^{\frac{d}{2}}}
\end{equation}
and
\begin{equation}
||\partial_t u(t)||_\infty\le\frac{\tilde C_0}{(1+|t|)^{\frac{d}{2}}}
\end{equation}
hold for all for all $t\ge 0$ and some constants $C_0=C_0(d,V)>0$ and $\tilde C_0=\tilde C_0(d,V)>0$.
\end{theorem}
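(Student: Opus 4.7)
The strategy is to follow the proof of Theorem~\ref{th:dispersive d>=3 small} essentially verbatim, with the local cubic nonlinearity $\pm|u|^2u$ playing the role of the Hartree nonlinearity $(w*|u|^2)u$. Heuristically the cubic case is the formal $w=\delta_0$ limit of the Hartree case, and every step of the Hartree proof that uses Young's inequality $||w*f||_p\le ||w||_1||f||_p$ has a direct analogue in which $||w||_1$ is replaced by $1$. In particular, for every $p\in[1,\infty]$ one has the pointwise identity $|(|u|^2u)(x)|=|u(x)|^3$, so $|||u|^2u||_p\le ||u||_\infty^2||u||_p$; combined with mass conservation $||u(t)||_2=||u_0||_2$ this yields $|||u|^2u||_1\le ||u_0||_2^2||u||_\infty$, which is exactly the decay input that drives the Duhamel bootstrap in the Hartree proof.

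Concretely, I would first show $\sup_{t\ge 0}||u(t)||_{H^k}\lesssim \epsilon_0$ by a continuity argument. Since $k>d/2$, the space $H^k(\R^d)$ is a Banach algebra, so a Moser-type estimate gives $|||u|^2u||_{H^k}\lesssim ||u||_\infty^2||u||_{H^k}$, replacing the Hartree bound $||(w*|u|^2)u||_{H^k}\lesssim ||w||_1||u||_\infty^2||u||_{H^k}$. The pointwise decay then follows from Duhamel's formula written against $e^{-\ri t(-\Delta+V)}$: split the time integral at $t/2$; on $[0,t/2]$ use the dispersive hypothesis to bound $||e^{-\ri (t-s)(-\Delta+V)}(|u|^2u)(s)||_\infty$ by $C^V(t-s)^{-\frac{d}{2}}|||u|^2u(s)||_1$, and on $[t/2,t]$ use the Sobolev embedding $||\cdot||_\infty\lesssim ||\cdot||_{H^k}$ (valid since $k>d/2$) together with the uniform $H^k$ bound just proved. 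Closing a bootstrap on the quantity $\sup_{s\in[0,t]}(1+s)^{\frac{d}{2}}||u(s)||_\infty$ yields the first decay estimate.

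The bound on $\partial_t u$ is obtained by observing that $v:=\partial_t u$ solves the linear equation $\ri\partial_t v=(-\Delta+V)v\pm(2|u|^2v+u^2\overline{v})$, whose time-dependent potential $|u|^2$ is controlled by the $L^\infty$ decay of $u$ just established. Running the same Duhamel analysis for $v$, with initial datum $(\partial_t u)(0)$ satisfying the additional smallness hypotheses, gives the claimed decay. The main point that is genuinely different from the Hartree case is confirming global well-posedness at the $H^k$ level for $d\ge 4$, where the cubic equation is $H^1$-supercritical; however, the smallness of $||u_0||_{H^k}$ together with $H^k\hookrightarrow L^\infty$ reduces this to a standard continuity argument paralleling the proof of Theorem~\ref{th:global H^2 solutions}, so no fundamentally new ideas are required beyond those already present in the Hartree proof.
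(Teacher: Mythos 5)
Your plan follows the paper's first strategy (a direct rerun of the proof of Theorem \ref{th:dispersive d>=3 small} with $w$ formally replaced by $\pm\delta_0$, so that every occurrence of $||w||_1$ becomes $1$), and the individual estimates you list are the right ones. However, the \emph{ordering} you propose has a genuine gap: you say you would ``first show $\sup_{t\ge 0}||u(t)||_{H^k}\lesssim\epsilon_0$ by a continuity argument'' and only then derive the pointwise decay. The Moser/Kato--Ponce bound $|||u|^2u||_{H^k}\lesssim ||u||_\infty^2||u||_{H^k}$ fed into Duhamel (or an energy estimate) gives $||u(t)||_{H^k}\le C||u_0||_{H^k}\exp\bigl(C\int_0^t||u(s)||_\infty^2\rd s\bigr)$, and without the decay you only control $||u||_\infty$ by $||u||_{H^k}$, which yields a bound on a time interval of length $\sim\epsilon_0^{-2}$, not a global one; there is no conservation law giving $H^k$ control for $k>\frac d2$. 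In other words, the global $H^k$ bound needs $\int_0^\infty||u(s)||_\infty^2\rd s<\infty$, which is exactly the decay you are trying to prove, so the two estimates cannot be decoupled. The paper avoids this circularity by closing a \emph{single} bootstrap quantity $M(T)=\sup_{t\le T}|t|^{\frac d2}||u(t)||_\infty+\sup_{t\le T}||D^ku(t)||_2+||u_0||_2$ (Definition \ref{de:M(T) for d>=3 small}), proving $M(T)\le\epsilon+CM(T)^3$ and invoking continuity of $T\mapsto M(T)$ together with the blow-up alternative; the same joint quantity $\tilde M(T)$ is used for $\partial_t u$. Your argument becomes correct once you replace the two-step structure by this simultaneous bootstrap, which is anyway what ``verbatim'' adaptation of the Hartree proof produces.

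Two further remarks. First, the well-posedness input you worry about for $d\ge4$ is not where the difficulty lies: one only needs \emph{local} $H^k$ theory with a blow-up alternative, mass conservation, and Duhamel for $u$ and $\partial_t u$; globality at the $H^k$ level then falls out of the bootstrap. The paper assembles these from Cazenave (Lemma 4.10.2 for the local nonlinearity, Lemma \ref{le:H^m boundedness and continuity of interaction part} for $Vu$, Proposition 4.2.9 for uniqueness), together with Lemma \ref{le:W^(k,p) dispersive estimate} for the $H^k$-boundedness of $e^{-\ri tH}$, which your $[t/2,t]$ (in the paper, $[t_0,t]$ with $t_0=\max\{t-1,0\}$) estimate implicitly uses and should cite; also treat small times separately, since the dispersive piece of your split degenerates like $t^{1-\frac d2}$ as $t\to0$. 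Second, the paper offers an alternative proof you only gesture at: approximate $\pm|u|^2u$ by Hartree nonlinearities $w_n*|u|^2$ with $w_n\to\pm\delta_0$, $||w_n||_1=1$, apply Theorem \ref{th:dispersive d>=3 small} uniformly in $n$ (this is where the $||w||_1$-only dependence of $C_0$ is essential), and pass to the limit via a Gronwall estimate on $||u-u_n||_2$; this route needs no separate small-data analysis for the cubic equation itself.
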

\begin{remark}[The cubic nonlinear Schrödinger equation as a limit of Hartree type equations]
We can view the cubic nonlinear Schrödinger equation as a limit of Hartree type equations with interaction potentials $w$ that converge in the distributional sense to the delta distribution $\delta_0$ or $-\delta_0$, respectively. Moreover, the $w$-dependence of the constant $C_0$ in Theorem \ref{th:dispersive d>=3 small} is only in terms of $||w||_1$. For these two reasons, it is natural to expect that a result such as Theorem \ref{th:cubic NLS main result} holds if Theorem \ref{th:dispersive d>=3 small} is true, which is the corresponding result for Hartree type equations. At the end of Section \ref{s:conclusion of the main theorem}, we explain two different strategies for proving Theorem \ref{th:cubic NLS main result}.  
\end{remark}
\subsection{Proof strategy}\label{ss:proof strategy introduction}
In this subsection, we describe our proof strategy for the estimate
\begin{equation}
||u(t)||_\infty\le\frac{C_0}{(1+|t|)^{\frac{d}{2}}}
\end{equation}
in Theorem \ref{th:dispersive d>=3 small}. The proof idea is a combination of the proof ideas in \cite{gm} and \cite{katopusateri}. The constants $C>0$ in this subsection can change from line to line but they do not depend on $t$ or $u$. Define $H:=-\Delta+V$. 
\\ \\
By Duhamel's formula, see Lemma \ref{le:Duhamel}, we have
\begin{equation}
u(t) = e^{-\ri tH}u_0-\ri\int_0^t\dr s e^{-\ri (t-s)H}(w*|u(s)|^2)u(s)\, .
\end{equation}
Taking the $L^\infty$ norm on both sides and using the dispersive estimate for $e^{-\ri tH}$ of the form $||e^{-\ri tH}f||_\infty\le C|t|^{-\frac{d}{2}}||f||_1$, we get for $t\ge 0$
\begin{align*}
||u(t)||_\infty &\le ||e^{-\ri tH}u_0||_\infty+\int_0^t\dr s ||e^{-\ri (t-s)H}(w*|u(s)|^2)u(s)||_\infty\\
&\le ||e^{-\ri (t+1)H}e^{\ri H}u_0||_\infty+ \int_0^t\dr s C|t-s|^{-\frac{d}{2}}||(w*|u(s)|^2)u(s)||_1\\
&\le C|t+1|^{-\frac{d}{2}}||e^{\ri H}u_0||_1+ C\int_0^t\dr s |t-s|^{-\frac{d}{2}}||w||_1||u(s)||_2^2||u(s)||_\infty\\
&\le C(1+|t|)^{-\frac{d}{2}}||e^{\ri H}u_0||_1+ C||w||_1||u_0||_2^2\int_0^t\dr s |t-s|^{-\frac{d}{2}}||u(s)||_\infty\, ,
\end{align*}
where we used Young's inequality and Hölder's inequality in the second last step and the conservation of the $L^2$ norm of $u$ (see Theorem \ref{th:local well-posedness and conservation of mass and energy}) in the last step. Note that $|\cdot|^{-\frac{d}{2}}\notin L^1(0,1)$ for $d\ge 3$, so the last integral is infinite unless $\liminf_{s\uparrow t}||u(s)||_\infty= 0$. Therefore, we should estimate the integral for s close to $t$ differently. More precisely, we will use a different estimate for $s\in [t_0,t]$, where $ t_0:=\max\{t-1,0\}$. Call this term
\begin{equation}
(R):=\int_{t_0}^t\dr s ||e^{-\ri (t-s)H}(w*|u(s)|^2)u(s)||_\infty\, .
\end{equation}
We will explain how to estimate $(R)$ later. So far, we have shown that
\begin{equation}\label{eq:a first estimate for ||u(t)||_infty using the direct estimate with a term (R)}
||u(t)||_\infty\le C(1+|t|)^{-\frac{d}{2}}||e^{\ri H}u_0||_1+ C||w||_1||u_0||_2^2\int_0^{t_0}\dr s |t-s|^{-\frac{d}{2}}||u(s)||_\infty +( R)\, . 
\end{equation}
An estimate of the form (\ref{eq:general decay estimate with 1+t}) for all $t\ge 0$ can also be written as
\begin{equation}\label{eq:dispersive estimate written as sup |t|^(d/2) ||u(t)||_infty<=C}
\sup_{t\ge 0} (1+|t|)^{\frac{d}{2}}||u(t)||_\infty\le C<\infty
\end{equation}
for some constant $C>0$. If we define
\begin{equation}
N(T):=\sup_{0\le t\le T} (1+|t|)^{\frac{d}{2}}||u(t)||_\infty\, ,
\end{equation}
then (\ref{eq:dispersive estimate written as sup |t|^(d/2) ||u(t)||_infty<=C}) is equivalent to 
\begin{equation}
N(T)\le C
\end{equation}
for all $T\ge 0$, where the constant $C>0$ is independent of $T$. By (\ref{eq:a first estimate for ||u(t)||_infty using the direct estimate with a term (R)}), we have
\begin{align*}
&\quad (1+|t|)^{\frac{d}{2}}||u(t)||_\infty\\
&\le C||e^{\ri H}u_0||_1+ C(1+|t|)^{\frac{d}{2}}||w||_1||u_0||_2^2\int_0^{t_0}\dr s |t-s|^{-\frac{d}{2}}(1+|s|)^{-\frac{d}{2}}N(s)+(1+|t|)^{\frac{d}{2}}(R)\, .
\end{align*}
Let $0\le t\le T$. Since $N(s)\le N(T)$ for $0\le s\le t$, we get
\begin{align*}
&\quad (1+|t|)^{\frac{d}{2}}||u(t)||_\infty\\
&\le C||e^{\ri H}u_0||_1+ C||w||_1||u_0||_2^2N(T)\int_0^{t_0}\dr s (1+|t|)^{\frac{d}{2}}|t-s|^{-\frac{d}{2}}(1+|s|)^{-\frac{d}{2}}+ (1+|t|)^{\frac{d}{2}}(R)\, .
\end{align*}
Note that by the definition of $t_0$, there exists a constant $C>0$ independent of $t\ge 0$ such that
\begin{equation}
\int_0^{t_0}\dr s (1+|t|)^{\frac{d}{2}}|t-s|^{-\frac{d}{2}}(1+|s|)^{-\frac{d}{2}}\le C\, .
\end{equation}
We can see this by splitting the integral into two terms: The first term is the integral over $s\in [0,t_0]$ with $1+s\le\frac{1+t}{2}$, where we estimate $|t-s|^{-\frac{d}{2}}=|(1+t)-(1+s)|^{-\frac{d}{2}}\le \left|\frac{1+t}{2}\right|^{-\frac{d}{2}}$. Thus, we can estimate the integrand by $2^{\frac{d}{2}}(1+s)^{-\frac{d}{2}}\in L^1([0,\infty))$. Similarly, the second term is the integral over $s\in[0,t_0]$ with $1+s\ge\frac{1+t}{2}$. Therefore, we can estimate the integrand by $2^{\frac{d}{2}}|t-s|^{-\frac{d}{2}}$. Since $t_0=\max\{0,t-1\}$, we can estimate the second term by $2^{\frac{d}{2}}\int_1^\infty \dr r |r|^{-\frac{d}{2}}<\infty$. Note that we used $d\ge 3$ here. 
\\ \\
We get
\begin{equation}\label{eq:estimate for (1+t)^(d/2) infinity norm of u by an expression with N(T) and (R)}
(1+|t|)^{\frac{d}{2}}||u(t)||_\infty \le C_1||e^{\ri H}u_0||_1+ C_2N(T)||w||_1||u_0||_2^2+ (1+|t|)^{\frac{d}{2}}(R)
\end{equation}
for some constants $C_1,C_2>0$. Let us forget about $(R)$ for a moment; that is,  assume $(R)=0$. If 
\begin{equation}\label{eq:L2 norm of u_0 small introduction}
C_2||w||_1||u_0||_2^2\le\frac{1}{2}\, , 
\end{equation}
then we obtain
\begin{equation}\label{eq:N(T)<=C/2+1/2N(T) introduction}
N(T)\le \frac{\tilde C}{2}+\frac{1}{2}N(T)
\end{equation}
with $\tilde C:=2C_1||e^{\ri H}u_0||_1$ by taking the supremum over $0\le t\le T$. Note that \eqref{eq:L2 norm of u_0 small introduction} is satisfied when $||u_0||_2$ is small enough. An inequality such as (\ref{eq:N(T)<=C/2+1/2N(T) introduction}) would then imply that $N(T)\le \tilde C$ for every $T\ge 0$ if we know that $N(T)<\infty$ for every $T<\infty$. 
\\ \\
However, it is not that simple. We still have to estimate $(R)$, which is the most difficult term in the proof of Theorem \ref{th:dispersive d>=3 small}. For the estimate of $(R)$, we will need good control on $\sup_{0\le t\le T}||D^k u(t)||_2$, where $k$ is the even integer with $k>\frac{d}{2}$ from the assumptions of Theorem \ref{th:dispersive d>=3 small}. Therefore, instead of considering $N(T)$, it turns out to be more helpful to look at 
\begin{equation}
M(T):= \sup_{0\le t\le T} (1+|t|)^{\frac{d}{2}}||u(t)||_\infty+\sup_{0\le t \le T}||D^k u(t)||_2+||u_0||_2\, . 
\end{equation}
Note that the definition of $M(T)$ is similar to the definition of $||u||_{X_T}$ in \cite{katopusateri}. Moreover, by (\ref{eq:estimate for (1+t)^(d/2) infinity norm of u by an expression with N(T) and (R)}), we have
\begin{equation}\label{eq:Estimate for (1+t)^(d/2) infinity norm of u with M(T) and (R)}
(1+|t|)^{\frac{d}{2}}||u(t)||_\infty \le C_1||e^{\ri H}u_0||_1+ C_2||w||_1M(T)^3+ (1+|t|)^{\frac{d}{2}}(R)\, .
\end{equation}
We will show that there exists a constant $C_0>0$ such that $M(T)\le C_0$ for all $T\ge 0$, which implies that $N(T)\le C_0$. 
\\ \\
As we remarked at the beginning of this subsection, we have to estimate the term $(R)$ more carefully to get a good estimate for $||u(t)||_\infty$. The idea for the estimate of $(R)$ is taken from \cite{gm}. First, we apply a Sobolev inequality: We know that $H^k(\R^d)$ embeds continuously into $L^\infty(\R^d)$ if $k>\frac{d}{2}$. Thus, 
\begin{align*}
(R)&=\int_{t_0}^t\dr s ||e^{-\ri (t-s)H}(w*|u(s)|^2)u(s)||_\infty\\
&\le C\int_{t_0}^t\dr s ||e^{-\ri (t-s)H}(w*|u(s)|^2)u(s)||_{H^k}\, .
\end{align*}
Recall that for any $f\in H^k(\R^d)$, we have
\begin{equation}
||e^{-\ri t(-\Delta)}f||_{H^k}=||f||_{H^k}\, .
\end{equation}
More generally, Lemma \ref{le:W^(k,p) dispersive estimate} with $p=2$ shows that if $k$ is even and $V\in W^{k,\infty}(\R^d)$, then there exists a constant $C>0$ such that
\begin{equation}\label{eq:Sobolev dispersive estimate for p=2 introduction}
||e^{-\ri tH}f||_{H^k}\le C||f||_{H^k}\, .
\end{equation}
Using this inequality, we get
\begin{equation}
(1+|t|)^{\frac{d}{2}}(R)\le C(1+|t|)^{\frac{d}{2}}\int_{t_0}^t\dr s ||(w*|u(s)|^2)u(s)||_{H^k}\, , 
\end{equation}
which can then be estimated by a constant times $M(T)^3$. Combining this with (\ref{eq:Estimate for (1+t)^(d/2) infinity norm of u with M(T) and (R)}), we obtain
\begin{equation}
\sup_{0\le t\le T}(1+|t|)^{\frac{d}{2}}||u(t)||_\infty \le C||e^{\ri H}u_0||_1+ CM(T)^3\, .
\end{equation}
When we combine this estimate with a corresponding estimate for $\sup_{0\le t\le T}||D^k u(t)||_2$, we will obtain an inequality of the form 
\begin{equation}\label{eq:M(T)<=epsilon+CM(T)^3 introduction}
M(T)\le \epsilon+CM(T)^3\, , 
\end{equation}
where $C>0$ is a fixed constant and $\epsilon$ is small if the initial data is small in the sense of \eqref{eq:initial data are small d>=3 small} in the assumptions in Theorem \ref{th:dispersive d>=3 small}. An inequality such as \eqref{eq:M(T)<=epsilon+CM(T)^3 introduction} was the key estimate in \cite{katopusateri}, where the quantity corresponding to our $M(T)$ was called $||u||_{X_T}$. If $M(T)<\infty$, equation \eqref{eq:M(T)<=epsilon+CM(T)^3 introduction} can be re-written as
\begin{equation}\label{eq:epsilon+CM(T)^3-M(T)>=0}
\epsilon+CM(T)^3-M(T)\ge 0\, .
\end{equation}
For $C>0$ fixed and $\epsilon>0$ small enough (depending on $C$), the function 
\begin{equation}
f:[0,\infty)\to\R,\ f(x):=\epsilon+Cx^3-x
\end{equation}
\sloppy is such that $\{f\ge 0\}$ consists of two disjoint intervals that have a strictly positive distance from each other. We call these intervals $I_1$ and $I_2$ and choose them such that $0\in I_1$. Note that $I_1$ is bounded, see also Lemma \ref{le:f>=0 two intervals} and Figure \ref{fi:graph of f}. Here, $x$ plays the role of $M(T)$.
\begin{figure}[h]
\centering
\includegraphics[width=9cm]{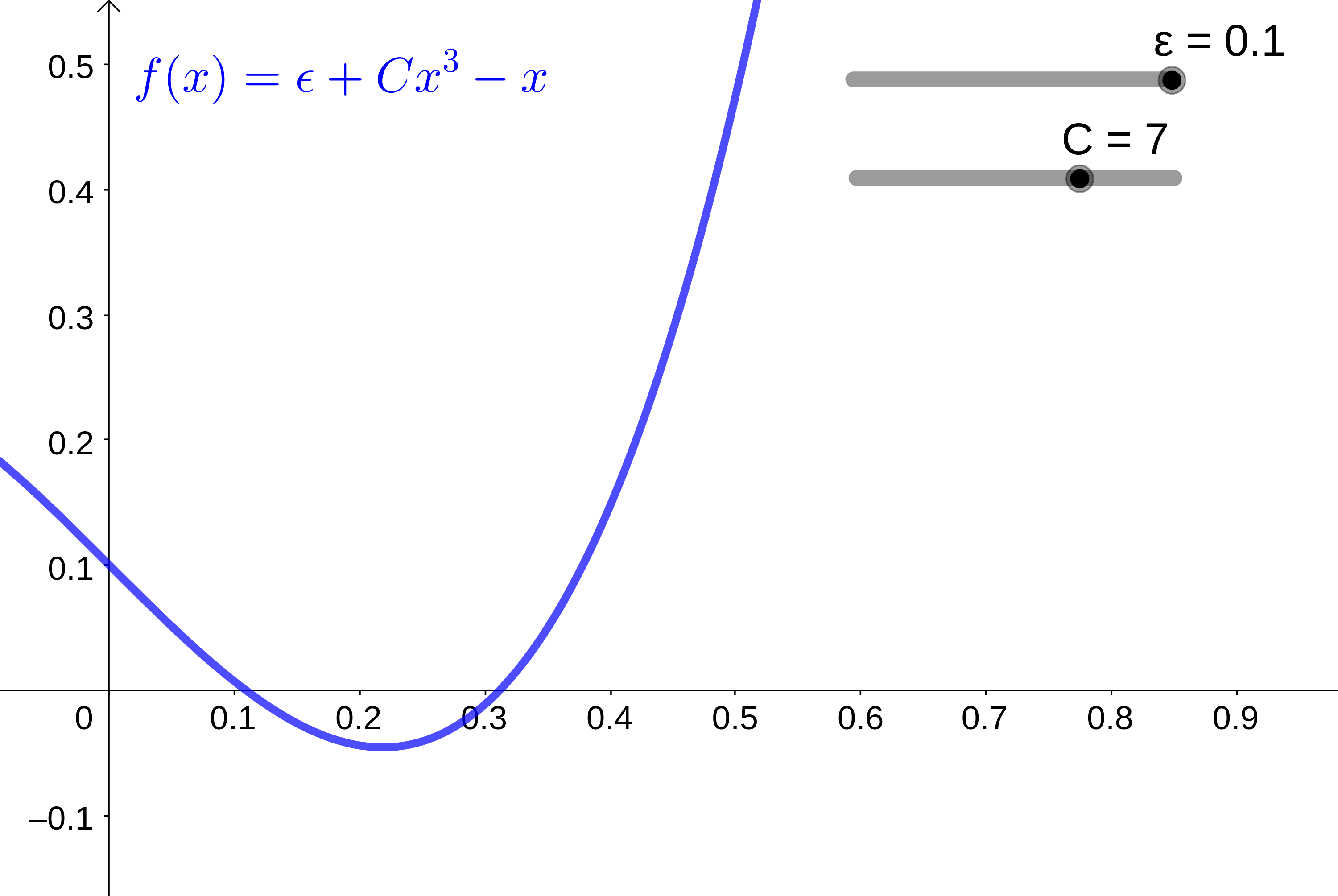}
\caption{This graph shows the function $f:[0,\infty), f(x):=\epsilon+Cx^3-x$ for $\epsilon=0.1$ and $C=7$. Note that the set $\{f\ge 0\}$ consists of two disjoint closed intervals.}\label{fi:graph of f}
\end{figure}
\\ \\
We know that $u$ is a global $H^2$-solution by Theorem \ref{th:global H^2 solutions}. By Theorem \ref{th:H^m solutions for m>d/2}, using the uniqueness of solutions, there exists a $T_{\max}\in (0,\infty]$ such that $u\in C\left([0,T_{\max}), H^k(\R^d)\right)$. Furthermore, by Theorem \ref{th:H^m solutions for m>d/2}, the blow-up alternative holds: If $T_{\max}<\infty$, then $\lim_{t\uparrow T_{\max}}||u||_{H^k}=\infty$ and $\lim_{t\uparrow T_{\max}}||u||_\infty=\infty$. By the Sobolev embedding theorem, the function $[0,\infty)\to[0,\infty],\ T\mapsto M(T)$ is continuous on $[0,T_{\max})$; in particular, $M(T)$ is finite on that interval. 
\\ \\
Assume that $M(0)\le\sup I_1=:C_0<\infty$. We claim that $T_{\max}=\infty$ and $M(T)\le C_0$ for all $T\ge 0$. We have $M(T)\le C_0$ for all $T\in[0,T_{\max})$ by equation (\ref{eq:epsilon+CM(T)^3-M(T)>=0}), Lemma \ref{le:f>=0 two intervals} and the continuity of $T\mapsto M(T)$ on $[0,T_{\max})$. If $T_{\max}<\infty$, then, by the blow-up alternative and the definition of $M(T)$, we get $\lim_{T\uparrow T_{\max}}M(T)=\infty$, which is a contradiction. Thus, $T_{\max}=\infty$ and hence, $M(T)\le C_0$ for all $T\ge 0$. 

\subsection{Organisation}
In Section \ref{s:Tools}, we recall and prove technical results, which we will need for the proof of our main result. We recall various results on solutions to the Hartree type equation from \cite{cazenave}. In order to be able to deal with non-zero external potentials $V$, we look at dispersive estimates for $e^{-\ri t H}$. We recall conditions under which there is an $L^1$--$L^\infty$ dispersive estimate and we prove \eqref{eq:Sobolev dispersive estimate for p=2 introduction}. Section \ref{s:various estimates} is devoted to proving the estimates, which we need for the proof of the main result. In Section \ref{s:conclusion of the main theorem}, we prove Theorem \ref{th:dispersive d>=3 small}. We follow the proof strategy explained in Subsection \ref{ss:proof strategy introduction}. Furthermore, we show an extension of Theorem \ref{th:dispersive d>=3 small} for large data under certain additional assumptions and we prove Theorem \ref{th:cubic NLS main result}.

\subsection{Notations}
We use the convention that all constants with an upper index are greater than or equal to $1$ unless we define them otherwise. For instance, we have
\begin{equation}
C^S, C^{ES}, C^V, C^{DS}, C^{KP}\ge 1\, . 
\end{equation}
For external potentials $V:\R^d\to\R$, define the operator $H:=-\Delta+V$. 
\\ \\
{\bf Acknowledgements.} 
The author would like to express her deepest gratitude to Phan Thành Nam for his continued support and very helpful discussions. The author acknowledges the support from the Deutsche Forschungsgemeinschaft (DFG project Nr. 426365943).
\section{Preliminaries}\label{s:Tools}
In this section, we recall some known results and we prove several technical lemmata, which we will need for our proofs. 
\subsection{Dispersive estimates for $e^{-\ri tH}$}
A natural question is to ask under which conditions on the external potential $V$ the operator $e^{-\ri tH}$ satisfies a dispersive estimate similar to $e^{-\ri t(-\Delta)}$. The proof of the following theorem under condition $(1)$ was provided in \cite[Theorem 1.1]{jounesoffersogge} and under condition $(2)$, a proof can be found in \cite[Theorem 1.1]{rs}.
\begin{theorem}[Dispersive estimate for $e^{-\ri tH}$ in $d\ge 3$]\label{th:dispersive estimate for e^(-itH)}
Let $d\ge 3$ and let $V:\R^d\to\R$. Furthermore, assume that one of the following assumptions is satisfied:
\begin{enumerate}
\item[(1)] 
\begin{enumerate}
\item \sloppy There exist $\eta>0$ and $\alpha>d+4$ such that the multiplication operator $\left(1+|x|^2\right)^{\frac{\alpha}{2}} V(x)$ is a bounded operator from $H^\eta(\R^d)$ to $H^\eta(\R^d)$. 
\item $\widehat{V}\in L^1(\R^d)$.
\item The operator $H$ has purely absolutely continuous spectrum.
\item $0$ is neither an eigenvalue nor a resonance for $H$. That is, there exists no function $\psi\neq 0$ in the weighted $L^2$-space $L^2(\langle x \rangle^\sigma\mathrm d x)$ for some $\sigma\ge 0$ such that $H\psi=0$ in the distributional sense. 
\end{enumerate} 
\item[(2)] $d=3$, 
\begin{equation}
\int_{\R^3}\dr{x}\int_{\R^3}\dr{y} \frac{|V(x)||V(y)|}{|x-y|^2}<(4\pi)^2
\end{equation}
and 
\begin{equation}
\sup_{x\in\R^3} \int_{\R^3}\dr{y} \frac{|V(y)|}{|x-y|}<4\pi\, .
\end{equation}
\end{enumerate}
Then there exists a constant $C^V=C^V(d,V)\ge 1$ such that for all $f\in L^1(\R^d)\cap L^2(\R^d)$ and all $t\in\R\setminus\{0\}$, we have 
\begin{equation}\label{eq:e^(-itH) dispersive estimate for infinity norm}
||e^{-\ri tH}f||_\infty\le C^V\frac{1}{|t|^{\frac{d}{2}}}||f||_1 \, . 
\end{equation}
More generally, if $p\in [2,\infty]$ and $\frac{1}{p}+\frac{1}{p'}=1$, the following \emph{$L^p$-dispersive estimate} holds true: For all $f\in L^{p'}(\R^d)\cap L^2(\R^d)$ and all $t\in\R\setminus\{0\}$, we have
\begin{equation}
||e^{-\ri tH}f||_p\le C^V|t|^{-\frac{d}{2}\left(\frac{1}{p'}-\frac{1}{p}\right)}||f||_{p'} \, . 
\end{equation}
\end{theorem}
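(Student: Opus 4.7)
The $L^\infty$--$L^1$ dispersive estimate \eqref{eq:e^(-itH) dispersive estimate for infinity norm} is precisely the main result of each of the two cited papers, so my plan is to take that endpoint as an external input and deduce the full $L^{p'}$--$L^p$ family by Riesz--Thorin interpolation against the trivial $L^2$ endpoint.

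First I would record the $L^2$ endpoint. Since $V$ is real-valued, each of the hypotheses in (1) or (2) makes $H=-\Delta+V$ essentially self-adjoint on $C_c^\infty(\R^d)$: under (1), the operator $\langle x\rangle^\alpha V$ is bounded on $H^\eta$, so in particular $V$ is a bounded (multiplication) perturbation of $-\Delta$, and under (2), the pointwise bound $\sup_x\int|V(y)||x-y|^{-1}\,\md y<4\pi$ places $V$ in the Kato class, so that Kato's self-adjointness theorem applies. Hence $e^{-\ri tH}$ is unitary on $L^2(\R^d)$ by the spectral theorem, which gives
\begin{equation*}
\|e^{-\ri tH}f\|_2=\|f\|_2
\end{equation*}
for every $f\in L^2(\R^d)$ and every $t\in\R$.

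Next I would interpolate. With the two endpoint bounds
\begin{equation*}
\|e^{-\ri tH}\|_{L^1\to L^\infty}\le C^V|t|^{-d/2},\qquad \|e^{-\ri tH}\|_{L^2\to L^2}\le 1,
\end{equation*}
Riesz--Thorin with parameter $\theta\in[0,1]$ between $(p_0,q_0)=(2,2)$ and $(p_1,q_1)=(1,\infty)$ yields, at the interpolating indices determined by $\frac{1}{p'}=\frac{1-\theta}{2}+\theta$ and $\frac{1}{p}=\frac{1-\theta}{2}$, the bound
\begin{equation*}
\|e^{-\ri tH}f\|_p\le 1^{1-\theta}\bigl(C^V|t|^{-d/2}\bigr)^{\theta}\|f\|_{p'}\le C^V|t|^{-\frac{d}{2}\left(\frac{1}{p'}-\frac{1}{p}\right)}\|f\|_{p'},
\end{equation*}
using $\frac{1}{p'}-\frac{1}{p}=\theta$ and $(C^V)^\theta\le C^V$ because $C^V\ge 1$. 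As $\theta$ sweeps $[0,1]$, the exponent $p$ covers all of $[2,\infty]$, which gives the second claim of the theorem.

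The only analytical work internal to this theorem is verifying self-adjointness, which is classical in both regimes. The genuinely hard analysis --- the $L^\infty$--$L^1$ endpoint itself --- is done in the two references: it requires resolvent expansions around zero energy together with the absence of a zero-energy eigenvalue or resonance under (1), and a Born-series/bootstrap argument exploiting the smallness conditions on $V$ under (2). Once that endpoint is granted, the extension to the full $L^{p'}$--$L^p$ range is purely bookkeeping, so I do not expect any additional obstacle here.
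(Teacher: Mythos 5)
Your proposal is correct and follows essentially the same route as the paper: the $L^1$--$L^\infty$ endpoint is taken from the two cited references, and the paper's accompanying remark obtains the $L^p$ family exactly as you do, from unitarity of $e^{-\ri tH}$ (self-adjointness of $H$) plus Riesz--Thorin, with the normalisation $C^V\ge 1$ absorbing the factor $(C^V)^\theta$.
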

\begin{remark}
If condition $(1)$ or $(2)$ is satisfied, then $H$ is a self-adjoint operator. In particular, $e^{-\ri t H}$ is unitary. The $L^p$-dispersive estimate follows from the conservation of the $L^2$ norm, the dispersive estimate (\ref{eq:e^(-itH) dispersive estimate for infinity norm}) and the Riesz-Thorin interpolation theorem, see for example \cite[Theorem 2.1, Lemma 4.1]{nd}. Note that we chose $C^V\ge 1$ and thus, the constant does not change for $p\in[2,\infty]$. 
\end{remark}
\begin{lemma}[$W^{k,p}(\R^d)$-dispersive estimate for $e^{\ri tH}$]\label{le:W^(k,p) dispersive estimate}
Let $d\ge 1$, $2\le p<\infty$ and assume that $V:\R^d\to\R$ is such that $e^{-\ri tH}$ is a unitary operator, which satisfies the dispersive estimate $||e^{-\ri tH}f||_\infty\le C^V |t|^{-\frac{d}{2}}||f||_1$ for some constant $C^V\ge 1$. Let $k\in\N_0$ be even and assume that $V\in W^{k,\infty}(\R^d)$. Then there exists a constant $C^{DS}=C^{DS}(d,k,||V||_{W^{k,\infty}}, C^V)\ge 1$ such that
\begin{equation}
||e^{-\ri tH}f||_{W^{k,p}}\le C^{DS} |t|^{-\frac{d}{2}\left(\frac{1}{p'}-\frac{1}{p}\right)}||f||_{W^{k,p'}}
\end{equation}
for all $f\in W^{k,p'}(\R^d)\cap L^2(\R^d)$. 
\end{lemma}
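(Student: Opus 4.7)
My plan is induction on the even integer $k\ge 0$, combining the commutation identity $H^{k/2}e^{-\ri tH}=e^{-\ri tH}H^{k/2}$ with the standard elliptic norm equivalence
\[
\|g\|_{W^{k,p}}\le C(d,k,p)\bigl(\|g\|_{W^{k-2,p}}+\|(-\Delta)^{k/2}g\|_p\bigr),\qquad 1<p<\infty,\ k\text{ even},
\]
which is a consequence of the $L^p$-boundedness of the second-order Riesz transforms. The base case $k=0$ is just the $L^{p'}\to L^p$ dispersive estimate, which follows from the assumed $L^1\to L^\infty$ bound and the $L^2$-unitarity of $e^{-\ri tH}$ via Riesz--Thorin interpolation (keeping the same constant $C^V\ge 1$).

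For the inductive step, assume the claim for $k-2$ and set $g(t):=e^{-\ri tH}f$, $\alpha:=\frac{d}{2}\bigl(\frac{1}{p'}-\frac{1}{p}\bigr)$. The norm equivalence reduces matters to estimating $\|g(t)\|_{W^{k-2,p}}$, handled directly by the inductive hypothesis, and $\|(-\Delta)^{k/2}g(t)\|_p$. For the latter, define $R:=H^{k/2}-(-\Delta)^{k/2}$. The key algebraic observation is that $R$ is a differential operator of order at most $k-2$ with coefficients polynomial in $V$ and its derivatives of order at most $k-2$: expanding $(-\Delta+V)^{k/2}$ as a sum of $2^{k/2}$ ordered products, any product containing $j\ge 1$ factors of $V$ has only $k/2-j$ factors of $-\Delta$, and iterated application of the first-order commutator $[-\Delta,V]h=-(\Delta V)h-2\nabla V\cdot\nabla h$ reorders such a product into a differential operator of order at most $2(k/2-j)=k-2j\le k-2$. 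A Leibniz count then yields $\|Rh\|_q\le C(\|V\|_{W^{k,\infty}})\|h\|_{W^{k-2,q}}$ and $\|H^{k/2}h\|_q\le C(\|V\|_{W^{k,\infty}})\|h\|_{W^{k,q}}$ for every $q\in[1,\infty]$.

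Combining the commutation identity with the $L^{p'}\to L^p$ dispersive estimate applied to $H^{k/2}f$,
\[
\|(-\Delta)^{k/2}g(t)\|_p\le \|e^{-\ri tH}H^{k/2}f\|_p+\|Rg(t)\|_p\le C^V|t|^{-\alpha}\|H^{k/2}f\|_{p'}+C\|g(t)\|_{W^{k-2,p}}.
\]
The bound on $\|H^{k/2}f\|_{p'}$ controls the first summand by $C|t|^{-\alpha}\|f\|_{W^{k,p'}}$, and the inductive hypothesis does the same for the second. Inserted back into the norm equivalence, this closes the induction. A standard density argument ($C_c^\infty(\R^d)$ is dense in $W^{k,p'}\cap L^2$) extends the inequality from smooth data to all $f\in W^{k,p'}\cap L^2$.

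The main obstacle is the algebraic lemma identifying $R$ as a differential operator of order at most $k-2$. The case $k=4$ already illustrates the pattern:
\[
H^2f-\Delta^2f=-(\Delta V)f-2\nabla V\cdot\nabla f-2V\Delta f+V^2f,
\]
so every term beyond $\Delta^2 f$ contains $V$ or a derivative thereof and involves at most two derivatives of $f$. The general statement is established either by a secondary induction on $k/2$ or by a direct combinatorial bookkeeping of the Leibniz expansion, using the hypothesis $V\in W^{k,\infty}(\R^d)$ to keep all resulting coefficients bounded in $L^\infty$.
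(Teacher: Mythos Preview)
Your proof is correct and follows essentially the same approach as the paper: both use the commutation $H^{k/2}e^{-\ri tH}=e^{-\ri tH}H^{k/2}$, the algebraic fact that $H^{k/2}-(-\Delta)^{k/2}$ is a differential operator of order $\le k-2$ with $L^\infty$ coefficients controlled by $\|V\|_{W^{k,\infty}}$, and the $L^{p'}\to L^p$ dispersive estimate obtained from Riesz--Thorin. The only cosmetic difference is packaging: the paper first proves the norm equivalence $\|\phi\|_{H^k}\simeq\sum_{j=0}^{k/2}\|H^j\phi\|_2$ (and remarks that the same works in $L^p$) and then applies the propagator bound in one step, whereas you run an induction on $k$ directly; the underlying Leibniz/commutator bookkeeping is identical.
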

\begin{remark}\label{re:W^(k,p) dispersive estimate Yajima}
\begin{itemize}
\item[(i)] Yajima proved decay estimates of this form under various decay assumptions on $V$ and its derivatives, see \cite[Theorem 1.3]{yajima} and \cite[Theorem 1.1, Theorem 1.2]{yajima3}. 
\item[(ii)] Lemma \ref{le:W^(k,p) dispersive estimate} provides a simple and certainly not optimal condition, under which an $L^1$--$L^\infty$ decay estimate extends to a $W^{k,p'}$--$W^{k,p}$ decay estimate, namely if $V\in W^{k,\infty}(\R^d)$. This is sufficient for the proof of our main theorem. By contrast, the results by Yajima include the proof of an $L^1$--$L^\infty$ decay estimate, which is just a special case of the $W^{k,p'}$--$W^{k,p}$ decay estimate, see \cite{yajima,yajima3}. Moreover, note that Yajima proved these decay estimates for all $2\le p\le \infty$, whereas we had to exclude $p=\infty$ in Lemma \ref{le:W^(k,p) dispersive estimate}.
\item[(iii)] If we are only interested in the case $p=2$, we do not need the assumption that $e^{-\ri tH}$ satisfies the dispersive estimate $||e^{-\ri tH}f||_\infty\le C^V |t|^{-\frac{d}{2}}||f||_1$. Instead, it suffices to assume that $H$ is a self-adjoint operator and hence, $e^{-\ri tH}$ is unitary. 
\end{itemize}
\end{remark}
\begin{proof}[Proof of Lemma \ref{le:W^(k,p) dispersive estimate} for $p=2$]
The two main ingredients of the proof are the fact that $H$ and $e^{-\ri t H}$ commute and that there exists a constant $C=C(d,k, ||V||_{W^{k,\infty}})>0$ such that
\begin{equation}\label{eq:equivalent Sobolev norm with V}
\frac{1}{C}\sum_{j=0}^\frac{k}{2} ||(-\Delta+V)^j\phi||_2\le ||\phi||_{H^k}\le C\sum_{j=0}^\frac{k}{2} ||(-\Delta+V)^j\phi||_2
\end{equation}
for all $\phi\in H^k(\R^d)$. Suppose \eqref{eq:equivalent Sobolev norm with V} is true. Using that $e^{-\ri t H}$ is unitary and that $H$ and $e^{-\ri t H}$ commute, we get for every $f\in H^k(\R^d)$
\begin{align*}
||e^{-\ri tH}f||_{H^k}&\le C\sum_{j=0}^\frac{k}{2} ||(-\Delta+V)^je^{-\ri tH}f||_2=C\sum_{j=0}^\frac{k}{2} ||H^je^{-\ri tH}f||_2\\
&=C\sum_{j=0}^\frac{k}{2} ||e^{-\ri tH}H^jf||_2\le C\sum_{j=0}^\frac{k}{2} ||H^jf||_2\le C^2 ||f||_{H^k}\, . 
\end{align*}
It remains to show \eqref{eq:equivalent Sobolev norm with V}. 
\subsubsection*{Lower bound}
Let $\ell\in\N$, $m\in2\N_0$ with $2\ell +m\le k$. By the Leibniz rule, we get
\begin{align*}
&\qquad ||(-\Delta+V)^\ell\phi||_{H^m} = ||(-\Delta+V)(-\Delta+V)^{\ell-1}\phi||_{H^m}\\
&\le ||-\Delta(-\Delta+V)^{\ell-1}\phi||_{H^m}+||V(-\Delta+V)^{\ell-1}\phi||_{H^m}\\
&\lesssim ||(-\Delta+V)^{\ell-1}\phi||_{H^{m+2}}+||V||_{W^{m,\infty}}||(-\Delta+V)^{\ell-1}\phi||_{H^m}\\
&\lesssim ||(-\Delta+V)^{\ell-1}\phi||_{H^{m+2}}\, . 
\end{align*}
By  iterating this process, we obtain
\begin{equation}
\sum_{j=0}^\frac{k}{2} ||(-\Delta+V)^j\phi||_2\lesssim ||\phi||_{H^k}\, .
\end{equation}
\subsubsection*{Upper bound}
Let $\ell\in\N_0$, $m\in2\N$ with $2\ell+m\le k$. Again, using the Leibniz rule, we obtain
\begin{align*}
&\qquad ||(-\Delta+V)^\ell\phi||_{H^m}\lesssim ||(-\Delta+V)^\ell\phi||_2+||-\Delta(-\Delta+V)^\ell\phi||_{H^{m-2}}\\
&\le ||(-\Delta+V)^\ell\phi||_2+ ||(-\Delta+V)(-\Delta+V)^\ell\phi||_{H^{m-2}}+||V(-\Delta+V)^\ell\phi||_{H^{m-2}}\\
&\lesssim ||(-\Delta+V)^\ell\phi||_2+ ||(-\Delta+V)^{\ell+1}\phi||_{H^{m-2}}+||V||_{W^{m-2,\infty}}||(-\Delta+V)^\ell\phi||_{H^{m-2}}\\
&\lesssim ||(-\Delta+V)^\ell\phi||_{H^{m-2}}+||(-\Delta+V)^{\ell+1}\phi||_{H^{m-2}}\, .
\end{align*}
Note that $2\ell+(m-2)\le k$ and $2(\ell+1)+(m-2)\le k$. Therefore, we can iterate this estimate and we get
\begin{equation}
||\phi||_{H^k}\lesssim \sum_{j=0}^\frac{k}{2} ||(-\Delta+V)^j\phi||_2 \, .
\end{equation}
\end{proof}
\begin{remark}
The proof of Lemma \ref{le:W^(k,p) dispersive estimate} for $p\in (2,\infty)$ follows the same strategy. Instead of using the unitarity of $e^{-\ri t H}$, this proof uses the $L^p$-dispersive estimate. Another key ingredient of the proof is the following: For $1<p<\infty$ and $k\in\N_0$ even, there exists a constant $C^{ES}=C^{ES}(d,k,p)\ge 1$ such that
\begin{equation}
\frac{1}{C^{ES}}\{||f||_p+||D^k f||_p\}\le ||f||_{W^{k,p}(\R^d)}\le C^{ES}\{||f||_p+||D^k f||_p\}
\end{equation}
for all $f\in W^{k,p}(\R^d)$, where $D^k:=(-\Delta)^{\frac{k}{2}}$. This inequality can be proved using the Gagliardo-Nirenberg inequality, see \cite[Equation (3.14)]{nd}, and the estimate
\begin{equation}
\left|\left|\frac{\partial^2 f}{\partial x_j\partial x_k}\right|\right|_p\le C||\Delta f||_p
\end{equation}
for some constant $C=C(p)>0$ for all $1<p<\infty$, $j,k\in\{1,\cdots,d\}$ and all $f\in C_c^2(\R^d)$, see \cite[Proposition 3, p. 59]{stein}.
\end{remark}
\subsection{The Hartree type equation}\label{ss:hartree type equation}
In this subsection, we consider the Hartree type equation
\begin{equation}\label{eq:Hartree}
\begin{cases}
\ri\partial_t u & =(-\Delta +V)u+(w*|u|^2)u\\
u(0) & =u_0\, . 
\end{cases}
\end{equation}
We collect some results from the literature on existence, uniqueness and continuity of solutions to this equation, see the book by Cazenave \cite{cazenave}. 
\\ \\
Let us first recall the definition of weak and strong solutions to \eqref{eq:Hartree}, see \cite[Definition 3.1.1]{cazenave}.
\begin{definition}[Weak solutions and strong solutions]\label{de:weak and strong solutions}
Let $u_0\in H^1(\R^d)$ and let $0\in I\subset \R$ be an interval. 
\begin{itemize}
\item[(i)] $u$ is called a \emph{weak $H^1$-solution} to (\ref{eq:Hartree}) if $u(0)=u_0$,
\begin{equation}
u\in L^\infty\left(I, H^1(\R^d)\right)\cap W^{1,\infty}\left(I,H^{-1}(\R^d)\right)
\end{equation}
and 
\begin{equation}
0=-\ri\partial_t u +(-\Delta +V)u+(w*|u|^2)u\ \mathrm{in\ } H^{-1}(\R^d) \mathrm{\ for \ almost \ every\ } t\in I.
\end{equation}
\item[(ii)] $u$ is called a \emph{strong $H^1$-solution} to (\ref{eq:Hartree}) if $u(0)=u_0$, 
\begin{equation}
u\in C\left(I, H^1(\R^d)\right)\cap C^1\left(I,H^{-1}(\R^d)\right)
\end{equation}
and
\begin{equation}
0=-\ri\partial_t u +(-\Delta +V)u+(w*|u|^2)u \mathrm{\ in\ } H^{-1}(\R^d) \mathrm{\ for \  every\ } t\in I.
\end{equation}
\end{itemize}
\end{definition}
Let us now define well-posedness for the Hartree type equation \eqref{eq:Hartree}, see \cite[Definition 3.1.5]{cazenave}.
\begin{definition}[Local well-posedness in $H^1$]\label{de:local well-posedness}
We call the initial value problem (\ref{eq:Hartree}) \emph{locally well-posed} in $H^1$ if the following properties hold:
\begin{itemize}
\item[(i)] There is uniqueness in $H^1
$ for (\ref{eq:Hartree}), that is, for every $u_0\in H^1(\R^d)$ and for every interval $0\in I\subset\R$, any two weak solutions to (\ref{eq:Hartree}) on $I$ coincide. 
\item[(ii)] For every $u_0\in H^1(\R^d)$, there exists a strong $H^1$-solution $u$ defined on a maximal interval $(-T_{\min},T_{\max})$, where $T_{\min},T_{\max}\in (0,\infty]$. $T_{\min}$ and $T_{\max}$ can depend on the initial data $u_0$. 
\item[(iii)] There is the blow-up alternative: If $T_{\max}<\infty$, then $\lim_{T\uparrow T_{\max}}||u(t)||_{H^1}=\infty$ (similarly for $T_{\min}$). 
\item[(iv)] The solution $u$ depends continuously on the initial data $u_0$: If $u_0^n\to u_0$ in $H^1(\R^d)$ as $n\to\infty$ and  $I\subset (-T_{\min},T_{\max})$ is a closed and bounded interval, then there exists $N\in\N$ such that for all $n\ge N$, we know that the corresponding solution to the Hartree type equation $u^n$ is defined on $I$. Furthermore, $u^n\to u$ in $C\left(I, H^1(\R^d)\right)$ as $n\to\infty$. 
\end{itemize}
\end{definition}
Let us define the energy, see \cite[Equation (3.3.9)]{cazenave}.
\begin{definition}[Energy]\label{de:energy}
Let $V:\R^d\to\R$, $V\in L^{p_V}(\R^d)+L^{q_V}(\R^d)$ for some $p_V,q_V\ge 1$ with $p_V,q_V>\frac{d}{2}$ and let $w:\R^d\to\R$ be an even function with $w\in L^{p_w}(\R^d)$ for some $p_w\ge 1$ with $p_w>\frac{d}{4}$. For any $v\in H^1(\R^d)$, we define the \emph{energy} by
\begin{equation}
E(v):=\int_{\R^d}|\nabla v|^2+\int_{\R^d}\dr x V(x) |v(x)|^2 +\frac{1}{2}\int_{\R^d}\dr x (w*|v|^2)(x)|v(x)|^2\, .
\end{equation}
\end{definition}
\begin{remark}\label{re:energy finite}
For every $v\in H^1(\R^d)$, we have $E(v)\in\R$ by \cite[Proposition 3.2.2(i), Proposition 3.2.9(i)]{cazenave}. From the proof of these statements, we also know that the energy $E(v)$ can be bounded by a constant $C<\infty$ that only depends on $||v||_{H^1(\R^d)}$, $||V||_{L^{p_V}+L^{q_V}}$ and $||w||_{p_w}$. Furthermore, $C\to 0$ as $||v||_{H^1(\R^d)}\to 0$.
\end{remark}
The Hartree type equation \eqref{eq:Hartree} is locally well-posed and mass and energy are conserved, see \cite[Theorem 4.3.1, Remark 3.3.4]{cazenave}.
\begin{theorem}[Local well-posedness and conservation of mass and energy]\label{th:local well-posedness and conservation of mass and energy}
Assume that $V:\R^d\to\R$, $V\in L^{p_V}(\R^d)+L^{q_V}(\R^d)$ for some $p_V,q_V\ge 1$ with $p_V,q_V>\frac{d}{2}$. Moreover, assume that $w:\R^d\to\R$ is an even function with $w\in L^{p_w}(\R^d)$ for some $p_w\ge 1$ with $p_w>\frac{d}{4}$. Let $u_0\in H^1(\R^d)$. Then the initial value problem (\ref{eq:Hartree}) is locally well-posed in $H^1$. Let $u$ be the corresponding strong $H^1$-solution to (\ref{eq:Hartree}). The mass and the energy are conserved:
\begin{align}
||u(t)||_2&=||u_0||_2\\
E(u(t))&=E(u_0)
\end{align} 
for all $t\in (-T_{\min}, T_{\max})$. 
\end{theorem}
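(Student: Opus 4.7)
The theorem is the standard $H^1$-subcritical well-posedness and conservation result for Hartree-type equations under weak integrability hypotheses on $V$ and $w$, treated in detail in \cite{cazenave}. My plan is to execute the usual Strichartz-based contraction-mapping scheme and then justify the two conservation laws by pairing the equation with $\bar u$ and $\partial_t \bar u$.

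First, I would check that the integrability hypotheses make $F(u) := Vu + (w*|u|^2)u$ into an $H^1$-subcritical nonlinearity. Writing $V = V_1 + V_2$ with $V_j \in L^{p_j}(\R^d)$ ($p_1 = p_V$, $p_2 = q_V$, both $>d/2$), Hölder's inequality combined with the Sobolev embedding $H^1(\R^d) \hookrightarrow L^s(\R^d)$ for $s \in [2, 2d/(d-2)]$ (with the obvious modifications when $d=1,2$) gives $||V_j u||_{L^{r_j'}} \lesssim ||V_j||_{p_j}\, ||u||_{H^1}$ for some Lebesgue exponents $r_j' < 2$. For the Hartree term, Young's inequality together with Hölder and Sobolev yield $||(w*|u|^2)u||_{L^{r'}} \lesssim ||w||_{p_w}\, ||u||_{H^1}^3$ for a suitable $r'$, using precisely $p_w > d/4$. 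Analogous estimates hold at the level of one derivative since $\nabla F(u)$ involves only $\nabla V$ (which we sidestep by a standard cutoff) and factors of $u, \nabla u$ controlled by Hölder. All bounds are Lipschitz in $u$ on bounded $H^1$-balls, and the target exponents can be chosen so that the corresponding pairs are Strichartz-admissible.

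Next I would run the fixed-point argument. Fix an admissible Strichartz pair $(q,r)$ and for $M, T>0$ consider
\begin{equation}
\mathcal{B}_{T,M} := \bigl\{ u \in C([-T,T]; H^1) \cap L^q([-T,T]; W^{1,r}) \, : \, ||u||_Y \le M \bigr\}
\end{equation}
with $Y$ the intersection norm. The Duhamel map
\begin{equation}
\Phi(u)(t) := e^{\ri t\Delta}u_0 - \ri\int_0^t e^{\ri(t-s)\Delta} F(u)(s)\, \md s
\end{equation}
is a contraction on $\mathcal{B}_{T,M}$ for $M$ comparable to $||u_0||_{H^1}$ and $T$ small, by the inhomogeneous Strichartz estimates together with the nonlinear bounds above. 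The unique fixed point is a strong $H^1$-solution; the maximal interval, blow-up alternative, uniqueness in the weak $H^1$-class, and continuous dependence follow from standard refinements spelled out in \cite[Chapters~3--4]{cazenave}.

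For the conservation laws, formally pairing the equation with $\bar u$ in the $H^{-1}$--$H^1$ duality and taking imaginary parts kills the right-hand side (since $V$ is real and $(w*|u|^2)|u|^2$ is real by reality of $w$), giving $\frac{\md}{\md t}||u||_2^2 = 0$. Pairing with $\partial_t \bar u$ and taking real parts, and using the evenness of $w$ in the identity $\frac{\md}{\md t}\tfrac{1}{2}\iint w(x-y)|u|^2(x)|u|^2(y)\,\md x\,\md y = 2\re\langle \partial_t u, (w*|u|^2)u\rangle$, yields $\frac{\md}{\md t} E(u) = 2\re\langle \partial_t u, Hu + (w*|u|^2)u\rangle = 2\re\langle \partial_t u, -\ri\partial_t u\rangle = 0$. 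The only delicate point, which I expect to be the main obstacle, is that a strong $H^1$-solution only satisfies $u \in C^1([-T,T]; H^{-1})$, so $\partial_t u$ is not a priori in $H^1$ and cannot be paired with $H^1$-valued quantities as above. The standard remedy is a density argument: approximate $u_0$ in $H^1$ by data for which the solution is more regular (producing $\partial_t u \in L^2$ so that the pairings become classical), verify the conservation laws at the approximate level, and pass to the limit via continuous dependence (item (iv) of Definition \ref{de:local well-posedness}).
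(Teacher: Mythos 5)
The paper itself does not prove this statement: it is imported verbatim from Cazenave's book, cited as \cite[Theorem 4.3.1, Remark 3.3.4]{cazenave}, so you are supplying a proof where the paper supplies only a reference. Your overall picture (check that $Vu$ and $(w*|u|^2)u$ are $H^1$-subcritical under $p_V,q_V>\frac{d}{2}$, $p_w>\frac{d}{4}$; local theory via Strichartz; mass conservation by the $H^{-1}$--$H^1$ pairing; energy conservation by regularising the data and passing to the limit through continuous dependence) is the standard one, and the conservation-law part, including your identification of the obstruction that $\partial_t u$ only lies in $H^{-1}$, is sound.

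The genuine gap is in the fixed-point step. You contract in $C([-T,T];H^1)\cap L^q([-T,T];W^{1,r})$, which forces you to estimate $\nabla F(u)$, and for the potential term this produces $(\nabla V)u$; under the hypothesis $V\in L^{p_V}(\R^d)+L^{q_V}(\R^d)$ there is no control of any kind on $\nabla V$, and the remark that you ``sidestep this by a standard cutoff'' does not repair it: truncating $V$ does not make it differentiable, and mollifying $V$ changes the equation, so you would need bounds uniform in the mollification parameter plus a compactness/limiting argument --- at which point you are no longer doing a contraction but the energy-method construction. (The Hartree term is harmless, since $\nabla(w*|u|^2)=w*\nabla(|u|^2)$ moves the derivative off the kernel, and exactly this device is used elsewhere in the paper.) The standard way around the rough potential, and the one underlying the cited result, is never to differentiate the nonlinearity: existence, the blow-up alternative and the conservation laws are obtained by an approximation/compactness argument with uniform $H^1$ bounds coming from the conserved quantities of the regularised problems, while uniqueness and continuous dependence use Strichartz estimates that require only the derivative-free Lipschitz bounds $||g(u)-g(v)||_{L^{\rho'}}\le C(M)||u-v||_{L^{r}}$ on $H^1$-bounded sets, as in \cite[Proposition 4.2.9]{cazenave} (the same proposition the paper invokes when it proves its Theorem \ref{th:H^m solutions for m>d/2}). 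As written, your $W^{1,r}$-based contraction cannot close under the stated hypotheses on $V$; either rework the existence step along these lines or, as the paper does, quote \cite[Theorem 4.3.1]{cazenave} directly.
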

Under suitable assumptions on $V$ and $w$, strong $H^1$-solutions are global, see \cite[Corollary 6.1.2]{cazenave}, and we have $H^2$ regularity, see \cite[Theorem 5.3.1, Remark 5.3.3]{cazenave}.
\begin{theorem}[Global $H^2$-solutions]\label{th:global H^2 solutions}
Let $V:\R^d\to\R$ with $V\in L^{p_V}(\R^d)+L^{q_V}(\R^d)$, where $p_V,q_V\ge 1$ and $p_V,q_V>\frac{d}{2}$. Let $w:\R^d\to\R$ be an even function with $w\in L^{p_w}(\R^d)$ for some $p_w\ge 1$ with $p_w>\frac{d}{4}$. Moreover, assume that $w_-\in L^\infty(\R^d)+L^{q_w}(\R^d)$ for some $q_w$ with $q_w\ge\max\left\{1,\frac{d}{2}\right\}$ if $d\neq 2$ and $q_w>1$ if $d=2$. Let $u_0\in H^2(\R^d)$ and let $u$ be the unique global strong $H^1$-solution from Theorem \ref{th:local well-posedness and conservation of mass and energy}. Then $u$ is a global solution and $u\in C\left(\R, H^2(\R^d)\right)$. Furthermore, 
\begin{equation}
\sup_{t\in\R} ||\nabla u(t)||_2<\infty\, .
\end{equation}
\end{theorem}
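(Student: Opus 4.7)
By the blow-up alternative in Theorem \ref{th:local well-posedness and conservation of mass and energy}, it suffices to produce a time-uniform bound on $||u(t)||_{H^1}$ on any bounded subinterval of $(-T_{\min},T_{\max})$. Mass conservation immediately gives $||u(t)||_2 = ||u_0||_2$. For the gradient I would rearrange energy conservation as
\[
||\nabla u(t)||_2^2 = E(u_0) - \int V|u(t)|^2 - \tfrac{1}{2}\int (w*|u(t)|^2)|u(t)|^2,
\]
split $w = w_+ - w_-$ and drop the non-negative $w_+$-term, and bound each remaining term by $C_1 + C_2 ||\nabla u(t)||_2^{\alpha}$ with $\alpha<2$ via H\"older, Young's convolution inequality and Gagliardo-Nirenberg interpolation. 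The exponent conditions $p_V,q_V>\frac{d}{2}$, $p_w>\frac{d}{4}$, together with the assumption on $w_-$, are precisely what force $\alpha<2$, so absorbing the sub-quadratic term into the left-hand side closes the a priori bound uniformly in $t$. Global existence and $\sup_t ||\nabla u(t)||_2<\infty$ follow.

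\textbf{$H^2$-persistence.} Setting $v:=\partial_t u$ and differentiating the equation formally yields the linear Schr\"odinger equation
\[
\ri\partial_t v = Hv + (w*|u|^2) v + \bigl(w*2\re(\bar u v)\bigr) u.
\]
Testing against $v$ in $L^2$ and taking imaginary parts, the self-adjoint contributions from $Hv$ and from the real convolution $(w*|u|^2)v$ drop out, while the remaining cubic term is controlled by H\"older and Young as
\[
\left|\tfrac{d}{dt}||v(t)||_2^2\right| \lesssim ||w||_{p_w} ||u(t)||_{H^1}^{2} ||v(t)||_2^2,
\]
with $||u(t)||_{H^1}$ already uniformly bounded by Step~1. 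Gronwall then gives $||v(t)||_2 \le C(T) ||v(0)||_2$ on $[-T,T]$, and $v(0)=-\ri\bigl(Hu_0+(w*|u_0|^2)u_0\bigr)\in L^2$ since $u_0\in H^2$. Reading the equation as $\Delta u = Vu + (w*|u|^2)u + \ri v$ and estimating each term in $L^2$ using $V\in L^{p_V}+L^{q_V}$, $w\in L^{p_w}$ and Sobolev embedding finally yields $||u(t)||_{H^2}\le C(T)$.

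\textbf{Justification, continuity and main obstacle.} The derivation above is only formal since a priori $v$ is merely in $H^{-1}$ for an $H^1$-solution. The standard remedy is to approximate $u_0$ by $u_0^n\in H^4$ (for instance via Fourier cut-off), for which the corresponding local solutions $u^n$ lie in $C^1_t L^2_x$ so that all manipulations are legitimate; the estimates above then give uniform-in-$n$ bounds in $L^\infty([-T,T];H^2)$, and continuous dependence in $H^1$ from Theorem \ref{th:local well-posedness and conservation of mass and energy} lets one pass to the limit. Strong continuity $t\mapsto u(t)\in H^2$ is upgraded from weak continuity by running the Gronwall estimate forward and backward from any $t_0$, which makes $t\mapsto ||u(t)||_{H^2}$ itself continuous. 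I expect the hardest part to be this approximation argument: because $V$ is only assumed in $L^{p_V}+L^{q_V}$ rather than in $W^{2,\infty}$, the convenient norm equivalence of Lemma \ref{le:W^(k,p) dispersive estimate} for $H$ is unavailable, so one must pass through the ``$\Delta u=\ldots$'' rearrangement above rather than directly through $||Hu||_2$, and handle the low-regularity terms $Vu$ and $(w*|u|^2)u$ by Sobolev embedding in a way that is stable under the $u^n\to u$ limit.
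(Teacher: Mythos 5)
The paper gives no self-contained argument for this theorem: it imports it from Cazenave's book, citing \cite[Corollary 6.1.2]{cazenave} for globality of the strong $H^1$-solution and \cite[Theorem 5.3.1, Remark 5.3.3]{cazenave} for the persistence of $H^2$-regularity. Your outline reconstructs the standard strategy behind those citations (mass/energy conservation plus Gagliardo--Nirenberg for the uniform $H^1$-bound; an $L^2$-bound on $\partial_t u$ by Gronwall; recovery of $\Delta u$ from the equation), so the route is the intended one, but the execution of the $H^2$-persistence step has a genuine gap in the stated generality.

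The claimed differential inequality $\left|\frac{\rd}{\rd t}\|v(t)\|_2^2\right|\lesssim \|w\|_{p_w}\|u(t)\|_{H^1}^2\|v(t)\|_2^2$ does not follow from H\"older, Young and Sobolev under the hypothesis $p_w>\frac d4$. In the cubic term $\im\langle v,(w*2\re(\bar u v))u\rangle$ both factors of $v$ must be measured in $L^2$ and both factors of $u$ in some $L^a$, $L^b$ with $2\le a,b\le\frac{2d}{d-2}$ (for $d\ge3$); the exponent bookkeeping forces $\frac1a+\frac1b=1-\frac1{p_w}\ge\frac{d-2}{d}$, i.e.\ $p_w\ge\frac d2$. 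The theorem allows, e.g., $d=3$ and $w\in L^1(\R^3)$ only: then $w*(\bar u v)$ is no better than $\bar u v\in L^{3/2}$, and the trilinear pairing cannot be closed with $v$ only in $L^2$; truncating $w$ does not help, since the obstruction is the integrability exponent, not the size of the norm. This is exactly why Cazenave's proof of \cite[Theorem 5.3.1]{cazenave} runs the Gronwall argument on time-difference quotients $u(\cdot+h)-u(\cdot)$ in Strichartz norms, measuring the nonlinearity in $L^{\rho'}$ for an admissible pair rather than in $L^2$; the difference-quotient formulation also removes the need for your regularization step, which as written is circular (taking $u_0^n\in H^4$ and asserting $u^n\in C^1_tL^2_x$ presupposes an $H^4$- or $H^2$-persistence result of the very type being proved). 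Two smaller points in the same spirit: in Step 1 the borderline case $q_w=\frac d2$ produces a term exactly quadratic in $\|\nabla u\|_2$, so $\alpha<2$ fails and one must split $w_-$ into a piece with small $L^{d/2}$-norm (absorbed) plus an $L^\infty$ piece; and the final recovery of $\|\Delta u\|_2$ from $\Delta u=Vu+(w*|u|^2)u+\ri v$ needs $Vu\in L^2$, which is not automatic when $d=3$ and $p_V\in\left(\frac32,2\right)$.
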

Duhamel's formula will be one of the key ingredients of our proof.
\begin{lemma}[Duhamel's formula]\label{le:Duhamel}
Let $w\in L^{p_w}(\R^d)$ for some $p_w\ge 1$ with $p_w>\frac{d}{4}$ be an even real-valued function. Let $V:\R^d\to\R$ belong to $L^{p_V}(\R^d)+L^{q_V}(\R^d)$ for $\max\left\{\frac{d}{2},2\right\}<p_V,q_V\le\infty$. Let $u_0\in H^1(\R^d)$ and let $u$ be the unique strong solution to (\ref{eq:Hartree}) given by Theorem \ref{th:local well-posedness and conservation of mass and energy} defined on its maximal time interval of existence $I$ with $0\in I\subset\R$. Then
\begin{equation}
u(t) = e^{-\ri tH}u_0-\ri\int_0^t\dr s e^{-\ri (t-s)H}(w*|u(s)|^2)u(s)
\end{equation}
for every $t\in I$.
\end{lemma}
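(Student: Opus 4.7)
The plan is to define $v(t) := e^{\ri tH}u(t)$, show that $\partial_t v(t) = -\ri e^{\ri tH}(w*|u|^2)u(t)$ in an appropriate function space, integrate from $0$ to $t$, and then apply the isometry $e^{-\ri tH}$ to both sides. The hypothesis $\max\{d/2,2\} < p_V, q_V \le \infty$ on $V$ suffices (via the KLMN theorem) to make $H = -\Delta + V$ self-adjoint on $L^2(\R^d)$ with form domain $H^1(\R^d)$; hence $\{e^{-\ri tH}\}_{t\in\R}$ is a strongly continuous unitary group on $L^2$ that restricts to a family of operators that is uniformly bounded on $H^1$ over compact time intervals and extends by duality to a strongly continuous group of isometries on $H^{-1}$. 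The first step is to verify that $f(s) := (w*|u(s)|^2)u(s)$ lies in $C(I, L^2(\R^d))$: combining Young's inequality (with $w \in L^{p_w}$, $p_w > d/4$), the Sobolev embedding $H^1(\R^d) \hookrightarrow L^{2d/(d-2)}$ (with the usual variants for $d \le 2$) and the continuity $u \in C(I, H^1)$ gives this continuity together with uniform bounds on compact subintervals.

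Formally, the product rule yields $\partial_t v = \ri H e^{\ri tH} u + e^{\ri tH}\partial_t u$, and substituting $\partial_t u = -\ri H u - \ri f$ from Definition \ref{de:weak and strong solutions}(ii), together with the commutation of $e^{\ri tH}$ with $H$ (initially on the form domain, then by duality on $H^{-1}$), gives $\partial_t v(t) = -\ri e^{\ri tH} f(t)$. Since $t \mapsto e^{\ri tH} f(t)$ is continuous with values in $L^2(\R^d)$, the fundamental theorem of calculus applied in $L^2$ produces
\begin{equation*}
v(t) - v(0) = -\ri \int_0^t e^{\ri sH} f(s) \,\md s,
\end{equation*}
with the integral interpreted as a Bochner integral in $L^2$. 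Since $v(0) = u_0$, applying $e^{-\ri tH}$ and using $e^{-\ri tH} e^{\ri sH} = e^{-\ri(t-s)H}$ yields the claimed formula.

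The main obstacle is the rigorous justification of the product rule for $v$, because the regularity of $u$ is only $C(I, H^1) \cap C^1(I, H^{-1})$ and not $C^1(I, L^2)$, so the computation above lives, strictly speaking, in $H^{-1}$. The cleanest way to handle this is a density/approximation argument: choose $u_0^n \in H^2(\R^d)$ with $u_0^n \to u_0$ in $H^1(\R^d)$, so that the corresponding solutions satisfy $u^n \in C(I', H^2) \cap C^1(I', L^2)$ on each closed bounded $I' \subset (-T_{\min}, T_{\max})$ by the $H^2$-regularity theory (Theorem \ref{th:global H^2 solutions} or its local analogue). At this smoother level, $v^n(t) := e^{\ri tH} u^n(t)$ is genuinely $C^1$ into $L^2$ with the expected derivative, so Duhamel's formula holds for $u^n$ by the argument above, now entirely in $L^2$. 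Passing to the limit $n \to \infty$ on the compact interval $I'$ via the continuous-dependence property of Definition \ref{de:local well-posedness}(iv), the $L^2$-isometry of $e^{-\ri(t-s)H}$, and the continuity of the Hartree nonlinearity $H^1 \to L^2$ on bounded sets of $H^1$, transfers the formula to $u$; since $I' \subset I$ was arbitrary, this gives Duhamel's formula on all of $I$.
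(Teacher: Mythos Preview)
Your overall strategy---differentiate $v(t)=e^{\ri tH}u(t)$, integrate, undo the propagator, and regularise via $H^2$ data---is a legitimate route to Duhamel, but there is a genuine gap. You assert that $f(s)=(w*|u(s)|^2)u(s)$ lies in $C(I,L^2(\R^d))$ and later that the Hartree map is continuous from bounded sets of $H^1$ into $L^2$. Under the stated hypothesis $p_w>\tfrac{d}{4}$ this is \emph{false} in general. Take $d=5$ and $p_w$ just above $\tfrac54$: then $u\in H^1\subset L^{10/3}$ gives $|u|^2\in L^{5/3}$, and Young yields $w*|u|^2\in L^r$ only for $1/r\ge 1/p_w-2/5\approx 2/5$, i.e.\ $r\lesssim 5/2$; pairing with $u\in L^{10/3}$ forces $(w*|u|^2)u\in L^s$ with $1/s\ge 7/10$, so $s<2$. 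The correct target space is $H^{-1}$: by \cite[Proposition 3.2.9(i)]{cazenave} the map $u\mapsto (w*|u|^2)u$ is in $C(H^1,H^{-1})$, and nothing better is available at this generality. Both your Bochner-integral step and your passage to the limit therefore break down as written.

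The paper's proof avoids this entirely by invoking an abstract Duhamel statement \cite[Remark~1.6.1(ii)]{cazenave}. It checks that $H=-\Delta+V$ is self-adjoint with operator domain $\mathcal D(H)=H^2$ (this is where the hypothesis $p_V,q_V>\max\{d/2,2\}$ enters, giving Kato--Rellich operator-boundedness rather than merely KLMN form-boundedness), and then observes that $(w*|u|^2)u\in C(I,H^{-1})\subset C(I,H^{-2})=C(I,\mathcal D(H)^*)$, which is exactly the regularity required by the abstract result.

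Your argument can be repaired by carrying out the computation in $H^{-1}$ rather than $L^2$: once you know $e^{\ri tH}$ extends to a $C_0$-group on $H^{-1}$ and commutes there with $H:H^1\to H^{-1}$, the product rule for $v$ is rigorous in $C^1(I,H^{-1})$, the Bochner integral of $e^{\ri sH}f(s)$ is taken in $H^{-1}$, and the resulting identity holds a priori in $H^{-1}$; since $u(t)$ and $e^{-\ri tH}u_0$ are both in $L^2$, the integral term is in $L^2$ as well. Done this way, the $H^2$-approximation layer is unnecessary.
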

\begin{proof}
By \cite[Theorem on p. 25]{fa2nam}, the operator $H=-\Delta+V$ is self-adjoint on $L^2(\R^d)$ with domain $\mathcal D(H)=H^2(\R^d)$. By \cite[Proposition 3.2.9(i)]{cazenave}, the map $u\mapsto (w*|u|^2)u$ belongs to $C\left(H^1(\R^d),H^{-1}(\R^d)\right)$. Since $u$ is a strong $H^1$-solution, we have $u\in C\left(I, H^1(\R^d)\right)\cap C^1\left(I,H^{-1}(\R^d)\right)$ by Definition \ref{de:weak and strong solutions}. Combining these two facts, we find that $(w*|u|^2)u\in C\left(I, H^{-1}(\R^d)\right)\subset C\left(I, H^{-2}(\R^d)\right)=C\left(I, (\mathcal D(H))^*\right)$. We can now conclude by \cite[Remark 1.6.1(ii)]{cazenave}.
\end{proof}
\begin{remark}[Duhamel's formula for initial data at $t=1$]\label{re:Duhamel starting at t=1}
If we consider the Hartree type equation with initial data at $t=1$
\begin{equation}
\begin{cases}
\ri\partial_t u & =(-\Delta +V)u+(w*|u|^2)u\\
u(1) & =u_1\, , 
\end{cases}
\end{equation}
the corresponding Duhamel's formula is 
\begin{equation}
u(t) = e^{-\ri (t-1)H}u_1-\ri\int_1^t\dr s e^{-\ri (t-s)H}(w*|u(s)|^2)u(s)\, . 
\end{equation}
\end{remark}
\begin{remark}[Generalised Duhamel's formula]\label{re:generalised Duhamel}
The nonlinearity $(w*|u|^2)u$ in Lemma \ref{le:Duhamel} can be replaced by a more general nonlinearity $f\in C\left(I,H^{-2}(\R^d)\right)$, see \cite[Remark 1.6.1(ii)]{cazenave}.
\end{remark}
An analogous result to the following Lemma was proved in \cite[Lemma 4.10.2]{cazenave} for local nonlinearities $g(u)$, where $g$ is a function $g:\C\to\C$. We prove it for the non-local interaction $g(u)=(w*|u|^2)u$ and for $g(u)=Vu$. We will need this result for the proof of Theorem \ref{th:H^m solutions for m>d/2}.
\begin{lemma}\label{le:H^m boundedness and continuity of interaction part}
Let $d\ge 1$ and let $k\in\N$ with $k>\frac{d}{2}$. Moreover, let $w\in L^1(\R^d)$. For $u:\R^d\to\C$, let either $g(u)=Vu$ for $V\in W^{k,\infty}(\R^d)$ or $g(u):=(w*|u|^2)u$. Let $M>0$. Then there exists a constant $C(M)>0$ such that the following properties hold:
\begin{itemize}
\item[(i)] For all $u\in H^k(\R^d)$ with $||u||_\infty\le M$, we have
\begin{equation}
||g(u)||_{H^k}\le C(M)||u||_{H^k}\, . 
\end{equation}
\item[(ii)] For all $u,v\in H^k(\R^d)$ with $||u||_\infty,||v||_\infty\le M$, we have
\begin{equation}
||g(u)-g(v)||_2\le C(M)||u-v||_2\, . 
\end{equation}
\item[(iii)] For all $u,v\in H^k(\R^d)$ with $||u||_{H^k(\R^d)},||v||_{H^k(\R^d)}\le M$, we have
\begin{equation}
||g(u)-g(v)||_{H^k}\le C(M)||u-v||_{H^k}\, . 
\end{equation}
\end{itemize}
\end{lemma}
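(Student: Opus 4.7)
The three statements all reduce to two classical ingredients. First, Young's convolution inequality $||w*f||_p \le ||w||_1 ||f||_p$, which allows any derivative to be moved past the convolution onto the squared factor. Second, the tame Moser product estimate
\begin{equation*}
||fg||_{H^k} \le C\left(||f||_\infty ||g||_{H^k} + ||f||_{H^k} ||g||_\infty\right),
\end{equation*}
which holds since $k>d/2$, together with the Sobolev embedding $H^k(\R^d) \hookrightarrow L^\infty(\R^d)$ in the same range. The case $g(u)=Vu$ of all three statements is immediate from the Leibniz rule and $V\in W^{k,\infty}$, so below I focus on $g(u)=(w*|u|^2)u$.

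For part (i), the plan is to apply the Moser estimate to the product $(w*|u|^2)\cdot u$: the first factor is bounded in $L^\infty$ by $||w||_1 ||u||_\infty^2$ via Young, and in $H^k$ by $||w||_1 |||u|^2||_{H^k}$, the latter estimated by a further application of the Moser inequality to $|u|^2 = u\bar u$. This yields $||g(u)||_{H^k} \le C ||w||_1 M^2 ||u||_{H^k}$.

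For part (ii) I would use the telescoping identity
\begin{equation*}
g(u)-g(v) = \bigl(w*(|u|^2-|v|^2)\bigr)u + (w*|v|^2)(u-v)
\end{equation*}
together with the factorisation $|u|^2-|v|^2 = (u-v)\bar u + v(\bar u - \bar v)$. Each summand is estimated in $L^2$ using Hölder and Young: the second by $||w||_1 ||v||_\infty^2 ||u-v||_2$, and the first by $||u||_\infty \cdot ||w||_1 \cdot |||u|^2-|v|^2||_2 \le 2M^2 ||w||_1 ||u-v||_2$.

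Part (iii) is the main technical step and is obtained from the same telescoping decomposition but with the Moser $H^k$ estimate replacing Hölder. The Sobolev embedding first promotes the $H^k$ bounds to $L^\infty$ bounds on $u$, $v$ and $u-v$, so all $L^\infty$ factors appearing in the Moser estimates are controlled by $C M$. Applying Moser to each summand reduces matters to estimating $||w*|v|^2||_{H^k}$, $||w*|v|^2||_\infty$, $|||u|^2-|v|^2||_{H^k}$ and $||u-v||_{H^k}$; Young transfers the convolution norms to the squared factor, and a final Moser estimate applied to the factorisation of $|u|^2 - |v|^2$ yields $|||u|^2-|v|^2||_{H^k} \le C(M) ||u-v||_{H^k}$. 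The main obstacle is bookkeeping: at each application of Moser one has to decide carefully which factor carries the derivatives and which carries the $L^\infty$ norm, so that the final estimate depends linearly on $||u-v||_{H^k}$, with the constant $C(M)$ growing only polynomially in $M$.
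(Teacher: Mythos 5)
Your proposal is correct and follows essentially the same route as the paper: the paper's proof also rests on the telescoping decomposition of $g(u)-g(v)$, Young's inequality to move the convolution onto the squared factor, the Sobolev embedding $H^k\hookrightarrow L^\infty$, and the Kato--Ponce inequality for $D^k$ combined with the equivalence $||f||_{H^k}\simeq ||f||_2+||D^kf||_2$, which is exactly the tame Moser product estimate you invoke.
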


\begin{proof}
{\bf The case $g(u)=Vu$.}
Let $u,v\in H^k(\R^d)$. By the Leibniz rule, we have
\begin{equation}
||g(u)||_{H^k}\le C ||V||_{W^{k,\infty}}||u||_{H^k}
\end{equation}
for a constant $C>0$ only depending on $k$ and the dimension $d$. This shows (i). Similarly, we have
\begin{equation}
||g(u)-g(v)||_{H^k}=||V(u-v)||_{H^k}\le C ||V||_{W^{k,\infty}}||u-v||_{H^k}\, ,
\end{equation}
so (iii) holds. Moreover, we have
\begin{equation}
||g(u)-g(v)||_2=||V(u-v)||_2 \le ||V||_\infty||u-v||_2\, ,
\end{equation} 
which shows (ii). 
\\ \\
{\bf The case $g(u):=(w*|u|^2)u$.}
Let $M>0$. 
\subsubsection*{Proof of (i)}
Let $u\in H^k(\R^d)$ with $||u||_\infty\le M$. We have
\begin{equation}
||g(u)||_{H^k}\le C^{ES}\left(||g(u)||_2+||D^k g(u)||_2\right)
\end{equation}
for some constant $C^{ES}\ge 1$. 
\\ \\
For $||g(u)||_2$, we estimate
\begin{align*}
||g(u)||_2&= ||(w*|u|^2)u||_2\le ||w*|u|^2||_\infty||u||_2\le ||w||_1||u||_\infty^2||u||_2\\
&\le ||w||_1M^2||u||_2\le ||w||_1M^2||u||_{H^k}\, .
\end{align*}
For $||D^k g(u)||_2$, we use the Kato-Ponce inequality, which states that
\begin{equation}
||D^k(fh)||_2\le C^{KP}\left(||D^k f||_2||h||_\infty+||f||_\infty||D^k h||_2\right)
\end{equation}
for all $f,h$ and for some fixed constant $C^{KP}\ge 1$, see \cite[Theorem 1.4(2)]{gk}. We get
\begin{align*}
&\quad ||D^k g(u)||_2=||D^k [(w*|u|^2)u]||_2\\
&\le C^{KP}(||D^k (w*|u|^2)||_2||u||_\infty+||w*|u|^2||_\infty||D^k u||_2)\\
&\le C^{KP}(||w*(D^k(u\overline{u}))||_2||u||_\infty+||w||_1||u||_\infty^2||D^k u||_2)\\
&\le C^{KP}(||w||_1||(D^k(u\overline{u}))||_2M+||w||_1M^2||u||_{H^k})\\
&\le C^{KP}||w||_1 M(2C^{KP}||D^ku||_2||u||_\infty+M||u||_{H^k})\\
&\le 3(C^{KP})^2||w||_1 M^2||u||_{H^k}\, .
\end{align*}
To sum up,
\begin{equation}
||g(u)||_{H^k(\R^d)}\le C^{ES}\left(||g(u)||_2+||D^k g(u)||_2\right)\le 4C^{ES}(C^{KP})^2||w||_1 M^2||u||_{H^k}\, . 
\end{equation}
\subsubsection*{Proof of (ii)}
Let $u,v\in H^k(\R^d)$ with $||u||_\infty,||v||_\infty\le M$. We have
\begin{align*}
||g(u)-g(v)||_2&=||(w*|u|^2)u-(w*|v|^2)v||_2\\
&\le ||(w*|u|^2)(u-v)||_2+||(w*(|u|^2-|v|^2))v||_2\, .
\end{align*}
We estimate the first term by
\begin{align*}
||(w*|u|^2)(u-v)||_2&\le ||w*|u|^2||_\infty||u-v||_2\le ||w||_1||u||_\infty^2||u-v||_2\\
&\le ||w||_1M^2||u-v||_2
\end{align*}
and the second term by
\begin{align*}
&\qquad ||(w*(|u|^2-|v|^2))v||_2\le||w||_1||(|u|+|v|)(|u|-|v|)||_2||v||_\infty\\
&\le||w||_1M(||u||_\infty+||v||_\infty)|||u|-|v|||_2\le 2||w||_1M^2||u-v||_2\, . 
\end{align*}
Hence,
\begin{equation}
||g(u)-g(v)||_2\le ||(w*|u|^2)(u-v)||_2+||(w*(|u|^2-|v|^2))v||_2\le 3||w||_1M^2||u-v||_2\, . 
\end{equation}
\subsubsection*{Proof of (iii)}
Let $u,v\in H^k(\R^d)$ with $||u||_{H^k(\R^d)},||v||_{H^k(\R^d)}\le M$. We have
\begin{equation}
||g(u)-g(v)||_{H^k}\le C^{ES}\left(||g(u)-g(v)||_2+||D^k [g(u)-g(v)]||_2\right)
\end{equation}
for some constant $C^{ES}\ge 1$. We estimated the first summand in the proof of (ii). Thus, it remains to estimate the second summand. We have 
\begin{align*}
&\quad||D^k [g(u)-g(v)]||_2= ||D^k [(w*|u|^2)u-(w*|v|^2)v]||_2\\
&\le ||D^k [(w*|u|^2)(u-v)]||_2+||D^k [(w*(|u|^2-|v|^2))v]||_2\, . 
\end{align*}
By the Kato-Ponce inequality and the Sobolev embedding theorem ($||f||_\infty\le C^S||f||_{H^k}$ for some $C^S\ge 1$), we get
\begin{align*}
&\qquad||D^k [(w*|u|^2)(u-v)]||_2\\
&\le C^{KP}\left(||D^k [w*|u|^2]||_2||u-v||_\infty+||w*|u|^2||_\infty||D^k [u-v]||_2\right)\\
&\le C^{KP}\left(||w||_1||D^k [u\overline u]||_2||u-v||_\infty+||w||_1||u||_\infty^2||u-v||_{H^k}\right)\\
&\le C^{KP}||w||_1\left(C^{KP}(||D^k u||_2||\overline u||_\infty+||u||_\infty||D^k\overline u||_2)||u-v||_\infty+||u||_\infty^2||u-v||_{H^k}\right)\\
&\le 3(C^S)^2 (C^{KP})^2||w||_1 M^2||u-v||_{H^k}\, . 
\end{align*}
Similarly, we obtain
\begin{align*}
& \qquad ||D^k [(w*(|u|^2-|v|^2))v]||_2\\
&\le C^{KP}\left(||D^k [w*(|u|^2-|v|^2)]||_2||v||_\infty+||w*(|u|^2-|v|^2)||_\infty||D^k v||_2\right)\\
&\le C^{KP}||w||_1\left(||D^k [(u-v)\overline u+v\overline{(u-v)}||_2 ||v||_\infty+||(|u|+|v|)(|u|-|v|)||_\infty ||v||_{H^k}\right)\\
&\le C^{KP}||w||_1\Big(C^{KP}\big(||D^k[u-v]||_2||u||_\infty+||u-v||_\infty ||D^ku||_2+||D^kv||_2||u-v||_\infty\\
&\qquad +||v||_\infty||D^k[u-v]||_2\big)||v||_\infty+(||u||_\infty +||v||_\infty)||u-v||_\infty||v||_{H^k}\Big)\\
&\le 6 (C^S)^2 (C^{KP})^2 ||w||_1 M^2 ||u-v||_{H^k}\, . 
\end{align*}
Combining these two estimates, we get
\begin{equation}
||D^k [g(u)-g(v)]||_2\le 9(C^S)^2 (C^{KP})^2 ||w||_1 M^2 ||u-v||_{H^k}\, .
\end{equation}
\end{proof}
Finally, let us look at the existence of $H^k$-solutions for $k>\frac{d}{2}$ and the corresponding blow-up criterion; compare with \cite[Theorem 4.10.1]{cazenave}.
\begin{theorem}[$H^k$-solutions for $k>\frac{d}{2}$]\label{th:H^m solutions for m>d/2}
Let $d\ge 1$ and $k>\frac{d}{2}$. Let $V:\R^d\to\R$ with $V\in L^{p_V}(\R^d)+L^{q_V}(\R^d)$, where $ p_V,q_V\ge 1$ and $p_V,q_V>\frac{d}{2}$. Let $w:\R^d\to\R$ be an even function with $w\in L^{p_w}(\R^d)$ for some $p_w\ge 1$ with $p_w>\frac{d}{4}$. Moreover, assume that $w_-\in L^\infty(\R^d)+L^{q_w}(\R^d)$ for some $q_w$ with $q_w\ge\max\left\{1,\frac{d}{2}\right\}$ if $d\neq 2$ and $q_w>1$ if $d=2$. Let $u_0\in H^k(\R^d)$. Then there exist $T_{\min},T_{\max}\in (0,\infty]$ and a unique maximal strong solution $u\in C\left((-T_{\min},T_{\max}), H^k(\R^d)\right)$ of (\ref{eq:Hartree}). Moreover, the blow-up alternative holds: If $T_{\max}<\infty$, then $\lim_{t\uparrow T_{\max}}||u||_{H^k}=\infty$ and $\lim_{t\uparrow T_{\max}}||u||_\infty=\infty$ (similarly for $T_{\min}$).
\end{theorem}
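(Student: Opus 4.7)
The plan is to follow the classical fixed-point strategy for local $H^k$-solutions in the spirit of \cite[Theorem 4.10.1]{cazenave}, combined with the estimates from Lemma \ref{le:H^m boundedness and continuity of interaction part}. I would work with the $H^k$-version of Duhamel's formula, seeking $u$ satisfying $u(t) = e^{-\ri tH}u_0 - \ri\int_0^t e^{-\ri (t-s)H}(w*|u(s)|^2)u(s)\,\md s$ on a small interval $[-T,T]$. The key auxiliary fact I need is that $e^{-\ri tH}$ is bounded on $H^k(\R^d)$ uniformly in $t$; this follows from self-adjointness of $H$ combined with the equivalence of norms \eqref{eq:equivalent Sobolev norm with V}, exactly as in the $p=2$ case of Lemma \ref{le:W^(k,p) dispersive estimate}. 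Together with the Sobolev embedding $H^k \hookrightarrow L^\infty$ (since $k>d/2$) and part (i) of Lemma \ref{le:H^m boundedness and continuity of interaction part}, the map
\[
\Phi(v)(t) := e^{-\ri tH}u_0 - \ri\int_0^t e^{-\ri (t-s)H}(w*|v(s)|^2)v(s)\,\md s
\]
sends the closed ball $B_M := \{v \in L^\infty([-T,T];H^k(\R^d)) : \|v\|_{L^\infty_t H^k}\le M\}$ into itself for $M = 2\|u_0\|_{H^k}$ and $T = T(M,\|w\|_1)$ small.

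For the contraction, I would run the fixed-point argument in the weaker $L^\infty_t L^2$ metric, using part (ii) of Lemma \ref{le:H^m boundedness and continuity of interaction part} (the Kato--Cazenave trick): $B_M$ is closed and complete in this metric, and $\Phi$ contracts there for $T$ small enough. The resulting fixed point lies in $L^\infty_t H^k \cap C([-T,T];L^2)$, and weak-$*$ compactness of bounded sets in $H^k$ combined with Duhamel's formula upgrades regularity to $u \in C([-T,T];H^k(\R^d))$. Uniqueness of strong $H^k$-solutions on any common interval follows from the same $L^2$-Lipschitz estimate of Lemma \ref{le:H^m boundedness and continuity of interaction part}(ii) and Gronwall's lemma. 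Concatenating these local solutions yields a unique maximal strong $H^k$-solution on $(-T_{\min},T_{\max})$ for some $T_{\min},T_{\max}\in(0,\infty]$.

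For the blow-up alternative, suppose $T_{\max}<\infty$. Since the local existence time produced above depends only on $\|u_0\|_{H^k}$, if $\limsup_{t\uparrow T_{\max}}\|u(t)\|_{H^k} < \infty$ then I could restart the Cauchy problem at some $t_0$ close to $T_{\max}$ and extend beyond $T_{\max}$, contradicting maximality; hence $\|u(t)\|_{H^k}\to\infty$. For the $L^\infty$-blow-up, suppose for contradiction that $M_\infty := \sup_{0\le t<T_{\max}}\|u(t)\|_\infty < \infty$. Taking the $H^k$ norm of Duhamel's formula, using the $H^k$-boundedness of $e^{-\ri tH}$ and part (i) of Lemma \ref{le:H^m boundedness and continuity of interaction part} with $M = M_\infty$, I obtain
\[
\|u(t)\|_{H^k} \le C\|u_0\|_{H^k} + C(M_\infty)\int_0^t \|u(s)\|_{H^k}\,\md s,
\]
and Gronwall's lemma bounds $\|u(t)\|_{H^k}$ on $[0,T_{\max})$, contradicting the $H^k$-blow-up just established.

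The main subtle point is the $H^k$-boundedness of $e^{-\ri tH}$: Theorem \ref{th:H^m solutions for m>d/2} only lists $V\in L^{p_V}+L^{q_V}$, whereas the norm-equivalence argument behind Lemma \ref{le:W^(k,p) dispersive estimate} uses $V\in W^{k,\infty}(\R^d)$. Either one reads the theorem with the implicit hypothesis $V\in W^{k,\infty}$ (consistent with the assumptions of the main Theorem \ref{th:dispersive d>=3 small}, where this theorem is applied), or one bypasses the issue by writing Duhamel with the free evolution $e^{\ri t\Delta}$ and treating $Vu$ as part of the nonlinearity; in the latter case Lemma \ref{le:H^m boundedness and continuity of interaction part} applied to $g(u)=Vu$ supplies the needed $H^k$-estimate, again requiring $V\in W^{k,\infty}$. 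Either route completes the proof.
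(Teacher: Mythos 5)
Your construction is essentially the paper's own route: the paper simply invokes Steps 1 and 2 of the proof of Cazenave's Theorem 4.10.1, with Lemma \ref{le:H^m boundedness and continuity of interaction part} replacing Cazenave's Lemma 4.10.2 for \emph{both} nonlinear pieces $g(u)=(w*|u|^2)u$ and $g(u)=Vu$ (so the external potential is indeed treated as part of the nonlinearity, with the free propagator), and with uniqueness taken from Cazenave's Proposition 4.2.9 or from $H^1$-uniqueness in Theorem \ref{th:local well-posedness and conservation of mass and energy}. Your fixed-point scheme (contraction in the $L^\infty_t L^2$ metric on an $H^k$-ball, regularity upgrade, $L^2$-Lipschitz uniqueness via Gronwall) is exactly that argument spelled out, and your observation about the hypotheses on $V$ is a correct catch: the $g(u)=Vu$ case of Lemma \ref{le:H^m boundedness and continuity of interaction part} needs $V\in W^{k,\infty}(\R^d)$, which is not listed in the statement of Theorem \ref{th:H^m solutions for m>d/2} but is assumed in Theorem \ref{th:dispersive d>=3 small}, where the result is applied; your second route (free evolution, $Vu$ in the nonlinearity) is the one the paper takes.

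The one genuine gap is in the blow-up alternative, which the theorem states with limits: $\lim_{t\uparrow T_{\max}}||u(t)||_{H^k}=\infty$ \emph{and} $\lim_{t\uparrow T_{\max}}||u(t)||_\infty=\infty$. Your restart argument, as phrased (contradiction from $\limsup ||u(t)||_{H^k}<\infty$), and your Gronwall argument (contradiction from $\sup_{t<T_{\max}}||u(t)||_\infty<\infty$) only deliver the $\limsup$-versions. For the $H^k$ norm this is a trivial repair: if $\liminf_{t\uparrow T_{\max}}||u(t)||_{H^k}<\infty$, restart at times $t_n\uparrow T_{\max}$ along which the $H^k$ norm is bounded; since the local existence time depends only on $||u(t_n)||_{H^k}$, this extends the solution past $T_{\max}$. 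For the $L^\infty$ norm, however, your argument cannot rule out the scenario in which $||u(t)||_\infty$ dips below a fixed $M$ along a sequence $t_n\uparrow T_{\max}$ while $||u(t)||_{H^k}\to\infty$: restarting at $t_n$ gives, via Lemma \ref{le:H^m boundedness and continuity of interaction part}(i) and Gronwall, only $||u(t)||_{H^k}\le C||u(t_n)||_{H^k}e^{C(M')(t-t_n)}$ as long as the $L^\infty$ norm stays below $M'$, and since $||u(t_n)||_{H^k}$ is already huge and the linear flow does not preserve smallness of the $L^\infty$ norm, no contradiction follows. Obtaining the genuine limit $||u(t)||_\infty\to\infty$ is precisely the extra content of Cazenave's Step 2 and needs an additional argument (a local bound whose constants depend on the data only through its $L^\infty$ norm). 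This gap is harmless for the paper's application, since the proof of Theorem \ref{th:dispersive d>=3 small} only uses that a finite $T_{\max}$ forces $M(T)\to\infty$, which already follows from the $H^k$ blow-up together with conservation of mass, but as a proof of the stated theorem the $L^\infty$-limit claim is not established.
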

\begin{proof}
The proof of this result is provided in step 1 and step 2 of the proof of  \cite[Theorem 4.10.1]{cazenave} up to a small modification. In step 1, \cite[Lemma 4.10.2]{cazenave} is used, which was only proved for local nonlinearities. In our case, we use Lemma \ref{le:H^m boundedness and continuity of interaction part} for both the interaction part $g(u)=(w*|u|^2)u$ and for the part with the external potential $g(u)=Vu$. In step 2, Cazenave uses the uniqueness of the solution from  \cite[Proposition 4.2.9]{cazenave}. Note that alternatively, we can use the uniqueness in $H^1$, which follows from Theorem \ref{th:local well-posedness and conservation of mass and energy}. 
\end{proof}
\begin{remark}\label{re:maximal existence times for H^1 and H^k solutions}
Note that the maximal existence times $T_{\min}, T_{\max}$ in Theorem \ref{th:H^m solutions for m>d/2} do not necessarily agree with those of $H^1$-solutions from Theorem \ref{th:local well-posedness and conservation of mass and energy}. In particular, a solution could be an $H^k$-solution defined on a bounded maximal time interval $I$ but it could be possible to extend the solution to $\R$ as a global $H^1$-solution. 
\end{remark}
\section{Various estimates}\label{s:various estimates}
For notational reasons, it is more convenient to work with the \emph{Hartree type equation with initial data at $t=1$} 
\begin{equation} 
\begin{cases}
\ri\partial_t u & =-\Delta u+ Vu+(w*|u|^2)u\\
u(1) & =u_1
\end{cases}
\end{equation}
instead. The smallness assumption on the initial data for this equation is 
\begin{equation}
||e^{\ri H}u_1||_1\, , ||u_1||_{H^k}\le\epsilon_0
\end{equation}
and
\begin{equation}
||e^{\ri H}(\partial_t u)(1)||_1\, ,||(\partial_t u)(1)||_{H^k}\le\tilde\epsilon_0\, .
\end{equation}
In the setting of the Hartree type equation with initial data at $t=1$, Theorem \ref{th:dispersive d>=3 small} states that 
\begin{equation}
||u(t)||_\infty\le C_0\frac{1}{|t|^{\frac{d}{2}}}
\end{equation}
and
\begin{equation}
||\partial_t u(t)||_\infty\le \tilde C_0\frac{1}{|t|^{\frac{d}{2}}}
\end{equation}
for all $t\ge 1$. 
\\ \\
This section is devoted to proving the estimates we need in order to prove Theorem \ref{th:dispersive d>=3 small}. Suppose that the assumptions of Theorem \ref{th:dispersive d>=3 small} are satisfied. Let $T\ge 1$ and let $1\le t\le T$. Moreover, assume that $1\le s\le t$. 
\begin{definition}[Definition of $M(T)$]\label{de:M(T) for d>=3 small}
Define for $T\ge 1$
\begin{equation}
M(T):=\sup_{1\le t\le T} |t|^{\frac{d}{2}}||u(t)||_\infty+\sup_{1\le t\le T}||D^ku(t)||_2 +||u_1||_2\, . 
\end{equation}
\end{definition}
Let us start by proving an estimate for $\sup_{1\le t\le T} |t|^{\frac{d}{2}}||u(t)||_\infty$. In particular, we will prove both a direct estimate and a Sobolev type estimate for the term $||e^{-\ri (t-s)H}(w*|u(s)|^2)u(s)||_\infty$. Define $t_0:=\max\{1,t-1\}$. 
\begin{lemma}[Estimate for $||e^{-\ri (t-1)H}u_1||_\infty$]\label{le:estimate for the first part in Duhamel infinity norm in d>=3 small}
We have
\begin{equation}
|t|^{\frac{d}{2}}||e^{-\ri (t-1)H}u_1||_\infty \le C^V||e^{\ri H}u_1||_1\, . 
\end{equation}
\end{lemma}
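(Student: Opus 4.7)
The plan is to reduce the claim to a direct application of the hypothesised dispersive estimate for $e^{-\ri tH}$ after a one-unit time shift. Concretely, I would use the semigroup identity $e^{-\ri(t-1)H} = e^{-\ri tH} e^{\ri H}$, which is valid because $H$ is self-adjoint (so $\{e^{-\ri sH}\}_{s\in\R}$ forms a one-parameter unitary group on $L^2(\R^d)$ and its generator commutes with itself). Setting $f := e^{\ri H} u_1 \in L^1(\R^d) \cap L^2(\R^d)$ (the $L^1$ membership is exactly what the smallness hypothesis $\|e^{\ri H}u_1\|_1 \le \epsilon_0$ guarantees, while $L^2$ membership follows from $u_1 \in H^k \subset L^2$ and unitarity of $e^{\ri H}$), the hypothesis of Theorem \ref{th:dispersive d>=3 small} yields
\begin{equation*}
\|e^{-\ri(t-1)H} u_1\|_\infty = \|e^{-\ri tH} f\|_\infty \le C^V |t|^{-d/2} \|f\|_1 = C^V |t|^{-d/2} \|e^{\ri H} u_1\|_1.
\end{equation*}
Multiplying through by $|t|^{d/2}$ gives the desired inequality.

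The only point worth verifying is that the shift to $e^{-\ri tH}$ (rather than $e^{-\ri(t-1)H}$) is legitimate and beneficial: it is legitimate because $t \ge 1 > 0$, so the dispersive bound $\|e^{-\ri tH} f\|_\infty \le C^V |t|^{-d/2} \|f\|_1$ applies with $t \neq 0$; and it is beneficial because it produces the factor $|t|^{-d/2}$ in place of the less convenient $|t-1|^{-d/2}$, which would blow up as $t \downarrow 1$. This trick — shifting the reference time by $1$ so that the dispersive decay is controlled uniformly on $[1,\infty)$ — is the whole reason for formulating the problem with initial data at $t=1$ rather than at $t=0$. I do not anticipate a genuine obstacle here; the lemma is essentially bookkeeping that packages the $L^1$-bound on the shifted initial data into the $L^\infty$-decay that feeds into the subsequent Duhamel analysis.
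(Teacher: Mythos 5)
Your proof is correct and is essentially identical to the paper's: both rewrite $e^{-\ri(t-1)H}u_1=e^{-\ri tH}e^{\ri H}u_1$ and apply the assumed dispersive estimate to $f=e^{\ri H}u_1$, using $t\ge 1$ so that the factor $|t|^{-\frac{d}{2}}$ appears directly. Your additional remarks on why $f\in L^1\cap L^2$ and why the unit time shift avoids the singularity at $t=1$ are accurate bookkeeping that the paper leaves implicit.
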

\begin{proof}
By the dispersive estimate (Theorem \ref{th:dispersive estimate for e^(-itH)}), we get
\begin{equation}
|t|^{\frac{d}{2}}||e^{-\ri (t-1)H}u_1||_\infty=|t|^{\frac{d}{2}}||e^{-\ri tH}e^{\ri H}u_1||_\infty\le C^V||e^{\ri H}u_1||_1\, . 
\end{equation}
\end{proof}
\begin{lemma}[Direct estimate]\label{le:direct estimate d>=3 small}
We have 
\begin{equation}
||e^{-\ri (t-s)H}(w*|u(s)|^2)u(s)||_\infty\le C^V||w||_1M(T)^3|t-s|^{-\frac{d}{2}}|s|^{-\frac{d}{2}}\, .
\end{equation}
\end{lemma}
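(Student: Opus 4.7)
The plan is to combine the $L^1$--$L^\infty$ dispersive estimate for $e^{-\ri t H}$ with Young's inequality and H\"older's inequality, so that the $L^1$ norm of the nonlinearity factors into pieces that are controlled by $M(T)$ and the mass conservation law.

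First, I would apply Theorem~\ref{th:dispersive estimate for e^(-itH)} with the function $f = (w*|u(s)|^2)u(s)$ and time $t-s>0$, which yields
\begin{equation*}
\|e^{-\ri(t-s)H}(w*|u(s)|^2)u(s)\|_\infty \le C^V |t-s|^{-\frac{d}{2}}\,\|(w*|u(s)|^2)u(s)\|_1.
\end{equation*}
The task then reduces to bounding the $L^1$ norm of $(w*|u(s)|^2)u(s)$ by $\|w\|_1 M(T)^3 |s|^{-d/2}$.

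Next, I would estimate the nonlinearity by H\"older's inequality (pairing $L^1$ with $L^\infty$) followed by Young's convolution inequality:
\begin{equation*}
\|(w*|u(s)|^2)u(s)\|_1 \le \|w*|u(s)|^2\|_1\,\|u(s)\|_\infty \le \|w\|_1\,\|u(s)\|_2^2\,\|u(s)\|_\infty.
\end{equation*}
By Theorem~\ref{th:local well-posedness and conservation of mass and energy} the $L^2$ mass is conserved, so $\|u(s)\|_2 = \|u_1\|_2 \le M(T)$ by Definition~\ref{de:M(T) for d>=3 small}. By the same definition, $\|u(s)\|_\infty \le M(T)|s|^{-d/2}$ for $1\le s\le T$. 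Substituting these two bounds gives
\begin{equation*}
\|(w*|u(s)|^2)u(s)\|_1 \le \|w\|_1 M(T)^3 |s|^{-\frac{d}{2}}.
\end{equation*}

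Combining this with the dispersive estimate above yields the claim. There is no real obstacle here: the argument is just a direct application of the dispersive estimate, Young, H\"older, mass conservation, and the definition of $M(T)$. The only small point to be careful about is the placement of the norms in the H\"older step, so that the $\|u(s)\|_\infty$ factor (which carries the $|s|^{-d/2}$ decay) appears exactly once while the two other factors of $u(s)$ feed into the $\|u_1\|_2^2$ term coming from mass conservation.
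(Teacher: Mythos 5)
Your proposal is correct and follows essentially the same route as the paper: the $L^1$--$L^\infty$ dispersive estimate of Theorem~\ref{th:dispersive estimate for e^(-itH)} applied to $(w*|u(s)|^2)u(s)$, then H\"older and Young to reduce to $\|w\|_1\|u(s)\|_2^2\|u(s)\|_\infty$, and finally mass conservation together with the definition of $M(T)$ to absorb $\|u_1\|_2^2$ and $|s|^{\frac{d}{2}}\|u(s)\|_\infty$ into $M(T)^3$. No gaps.
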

\begin{proof}
By the dispersive estimate (Theorem \ref{th:dispersive estimate for e^(-itH)}), Hölder's inequality, Young's inequality and the conservation of the $L^2$ norm, see Theorem \ref{th:local well-posedness and conservation of mass and energy}, we have
\begin{align*}
&\qquad ||e^{-\ri (t-s)H}(w*|u(s)|^2)u(s)||_\infty\le C^V|t-s|^{-\frac{d}{2}}||(w*|u(s)|^2)u(s)||_1\\
&\le C^V|t-s|^{-\frac{d}{2}}||(w*|u(s)|^2)||_1||u(s)||_\infty\le C^V|t-s|^{-\frac{d}{2}}||w||_1|||u(s)|^2||_1||u(s)||_\infty\\
&\le C^V||w||_1|t-s|^{-\frac{d}{2}}||u_1||_2^2||u(s)||_\infty\le C^V||w||_1M(T)^3|t-s|^{-\frac{d}{2}}|s|^{-\frac{d}{2}}\, .
\end{align*}
\end{proof}
\begin{lemma}[Sobolev type estimate]\label{le:Sobolev type estimate in d>=3 small}
We have
\begin{equation}
||e^{-\ri(t-s)H}(w*|u(s)|^2)u(s)||_\infty\le C^{SE} ||w||_1M(T)^3|s|^{-d}\, ,
\end{equation}
where $C^{SE}:= 4C^S C^{DS}C^{ES}(C^{KP})^2$. 
\end{lemma}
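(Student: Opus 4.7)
The plan is to combine the Sobolev embedding $H^k\hookrightarrow L^\infty$ (valid since $k>d/2$) with the $H^k$-boundedness of $e^{-\ri tH}$ from Lemma \ref{le:W^(k,p) dispersive estimate} taken at $p=2$ (which reads $||e^{-\ri tH}f||_{H^k}\le C^{DS}||f||_{H^k}$, since the factor $|t|^{-d/2(1/p'-1/p)}$ equals one), and then control $||(w*|u(s)|^2)u(s)||_{H^k}$ by the Kato-Ponce inequality. Concretely, I would first write
\begin{equation*}
||e^{-\ri(t-s)H}(w*|u(s)|^2)u(s)||_\infty \le C^S||e^{-\ri(t-s)H}(w*|u(s)|^2)u(s)||_{H^k} \le C^S C^{DS}||(w*|u(s)|^2)u(s)||_{H^k},
\end{equation*}
so the whole problem is reduced to an $H^k$ bound for the nonlinearity at time $s$, with no $(t-s)$-dependence left.

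For the $H^k$ norm I would use the equivalent-norm inequality $||f||_{H^k}\le C^{ES}(||f||_2+||D^k f||_2)$ and estimate the two pieces separately. The $L^2$ part is easy: by Young's and Hölder's inequalities,
\begin{equation*}
||(w*|u|^2)u||_2\le ||w*|u|^2||_\infty ||u||_2\le ||w||_1 ||u||_\infty^2 ||u||_2.
\end{equation*}
For the $D^k$ piece I would apply the Kato-Ponce inequality twice: once to the product $(w*|u|^2)\cdot u$, and once to $D^k(|u|^2)=D^k(u\overline u)$, using $||w*g||_2\le ||w||_1||g||_2$ to pull $D^k$ through the convolution. Exactly as in the proof of Lemma \ref{le:H^m boundedness and continuity of interaction part}(i), this yields
\begin{equation*}
||D^k((w*|u|^2)u)||_2\le 3(C^{KP})^2 ||w||_1 ||u||_\infty^2 ||D^k u||_2.
\end{equation*}

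To conclude, I would insert the time-dependent bounds coming from the definition of $M(T)$: $||u(s)||_\infty\le M(T)|s|^{-d/2}$ and $||D^k u(s)||_2\le M(T)$, together with $||u(s)||_2=||u_1||_2\le M(T)$ by conservation of mass (Theorem \ref{th:local well-posedness and conservation of mass and energy}). Since $C^{KP}\ge 1$, one has $||u(s)||_2+3(C^{KP})^2||D^k u(s)||_2\le 4(C^{KP})^2 M(T)$, whence
\begin{equation*}
||(w*|u(s)|^2)u(s)||_{H^k}\le C^{ES}||w||_1 ||u(s)||_\infty^2\bigl(||u(s)||_2+3(C^{KP})^2||D^k u(s)||_2\bigr)\le 4C^{ES}(C^{KP})^2||w||_1 M(T)^3|s|^{-d}.
\end{equation*}
Multiplying by $C^S C^{DS}$ delivers the claimed bound with $C^{SE}=4C^S C^{DS}C^{ES}(C^{KP})^2$. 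There is no serious obstacle here; the only bookkeeping point is to make sure the two Kato-Ponce applications yield precisely the constant $3(C^{KP})^2$ and that $C^{KP}\ge 1$ is used to absorb the additive $1$ into the factor $4(C^{KP})^2$, so that the final constant matches the one in the statement rather than something unnecessarily larger.
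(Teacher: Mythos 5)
Your proposal is correct and follows essentially the same route as the paper's proof: Sobolev embedding plus the $p=2$ case of Lemma \ref{le:W^(k,p) dispersive estimate}, the equivalent-norm splitting into the $L^2$ and $D^k$ pieces, Young/H\"older for the $L^2$ part, two applications of Kato--Ponce for the $D^k$ part, and the same constant bookkeeping (absorbing the additive $1$ into $4(C^{KP})^2$ via $C^{KP}\ge 1$) to arrive at $C^{SE}=4C^S C^{DS}C^{ES}(C^{KP})^2$.
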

\begin{proof}
By Sobolev's inequality for $k>\frac{d}{2}$ and Lemma \ref{le:W^(k,p) dispersive estimate} for $p=2$, we have
\begin{align*}
&\quad ||e^{-\ri(t-s)H}(w*|u(s)|^2)u(s)||_\infty\le C^S ||e^{-\ri(t-s)H}(w*|u(s)|^2)u(s)||_{H^k}\\
&\le C^S C^{DS}||(w*|u(s)|^2)u(s)||_{H^k}\\
&\le C^S C^{DS}C^{ES}\left(||(w*|u(s)|^2)u(s)||_2+||D^k[(w*|u(s)|^2)u(s)]||_2\right)
\end{align*}
for some constant $C^{ES}\ge 1$. 
\subsubsection*{Estimate for $||(w*|u(s)|^2)u(s)||_2$.}
We have 
\begin{align*}
&\qquad ||(w*|u(s)|^2)u(s)||_2\le ||(w*|u(s)|^2)||_\infty||u(s)||_2\\
&\le ||w||_1|||u(s)|^2||_\infty||u_1||_2\le ||w||_1||u(s)||_\infty^2||u_1||_2\le ||w||_1M(T)^3|s|^{-d}\, . 
\end{align*}
\subsubsection*{Estimate for $||D^k[(w*|u(s)|^2)u(s)]||_2$.}
We use the Kato-Ponce inequality to get
\begin{align*}
&\quad ||D^k[(w*|u(s)|^2)u(s)]||_2\\
&\le C^{KP}\left(||D^k[w*|u(s)|^2]||_2||u(s)||_\infty + ||w*|u(s)|^2||_\infty||D^ku(s)||_2\ \right)\\
&\le C^{KP}\left(||w*(D^k[|u(s)|^2])||_2||u(s)||_\infty + ||w||_1|||u(s)|^2||_\infty ||D^ku(s)||_2\ \right)\\
&\le C^{KP}\left(||w||_1||D^k[u(s)\overline{u(s)}]||_2||u(s)||_\infty + ||w||_1||u(s)||_\infty^2 ||D^ku(s)||_2\ \right)\\
&\le C^{KP}\Big(||w||_1C^{KP}(||D^ku(s)||_2||\overline{u(s)}||_\infty||D^k\overline{u(s)}||_2||u(s)||_\infty)||u(s)||_\infty\\
&\qquad  +  ||w||_1||u(s)||_\infty^2 ||D^ku(s)||_2\ \Big)\\
&\le 3(C^{KP})^2||w||_1||u(s)||_\infty^2 ||D^ku(s)||_2\\
&\le 3(C^{KP})^2||w||_1M(T)^3|s|^{-d}\, . 
\end{align*}
\subsubsection*{Conclusion.}
We get
\begin{align*}
&\quad ||e^{-\ri(t-s)H}(w*|u(s)|^2)u(s)||_\infty\\
&\le C^S C^{DS}C^{ES}\left(||(w*|u(s)|^2)u(s)||_2+||D^k[(w*|u(s)|^2)u(s)]||_2\right)\\
&\le C^S C^{DS}C^{ES}\left(||w||_1M(T)^3|s|^{-d} + 3(C^{KP})^2 ||w||_1M(T)^3|s|^{-d}\right)\\
&\le 4C^S C^{DS}C^{ES}(C^{KP})^2||w||_1M(T)^3|s|^{-d}\\
&\le C^{SE} ||w||_1M(T)^3|s|^{-d}\, ,
\end{align*}
\sloppy where we set $C^{SE}:= 4C^S C^{DS}C^{ES}(C^{KP})^2$. Note that we used the fact that $C^S,C^{DS},C^{ES},C^{KP}\ge 1$. 
\end{proof} 
\begin{corollary}[Estimate for $|t|^{\frac{d}{2}}||u(t)||_\infty$]\label{co:estimate infinity norm d>=3 small}
We have
\begin{equation}
|t|^{\frac{d}{2}}||u(t)||_\infty\le C^V||e^{\ri H}u_1||_1+C^{\infty E}||w||_1M(T)^3\, ,
\end{equation}
where $C^{\infty E}:=\frac{2^{2+\frac{d}{2}}}{d-2}C^V+ 2^{\frac{d}{2}}C^{SE}$.
\end{corollary}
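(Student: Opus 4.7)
The plan is to apply Duhamel's formula, take the $L^\infty$ norm of both sides, and split the Duhamel integral at $t_0=\max\{1,t-1\}$. On the first piece $[1,t_0]$ I would use the direct estimate from Lemma \ref{le:direct estimate d>=3 small}, which blows up at $s=t$ like $|t-s|^{-d/2}$ but decays at $s=1$; on the second piece $[t_0,t]$ I would use the Sobolev-type estimate from Lemma \ref{le:Sobolev type estimate in d>=3 small}, which avoids the singularity near $s=t$ at the cost of a factor $|s|^{-d}$. Combined with Lemma \ref{le:estimate for the first part in Duhamel infinity norm in d>=3 small} for the free-evolution term $e^{-\ri(t-1)H}u_1$, this yields the $C^V\|e^{\ri H}u_1\|_1$ contribution plus two scalar integrals that I need to bound (after multiplying by $|t|^{d/2}$) by $\frac{2^{2+d/2}}{d-2}C^V\|w\|_1 M(T)^3$ and $2^{d/2}C^{SE}\|w\|_1 M(T)^3$ respectively.

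For the direct piece, note that $[1,t_0]$ is non-empty only when $t>2$, in which case $t_0=t-1$. I would split $[1,t-1]=[1,t/2]\cup[t/2,t-1]$. On $[1,t/2]$ use $|t-s|\ge t/2$ to bound $|t-s|^{-d/2}\le (t/2)^{-d/2}$, so the factor $|t|^{d/2}$ cancels and leaves $2^{d/2}\int_1^{t/2}s^{-d/2}\,ds\le 2^{d/2}\cdot\frac{2}{d-2}$; on $[t/2,t-1]$ use $s\ge t/2$ analogously and change variables $r=t-s$ to obtain $2^{d/2}\int_1^{t/2}r^{-d/2}\,dr\le 2^{d/2}\cdot\frac{2}{d-2}$. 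Summing gives $\frac{2^{2+d/2}}{d-2}$, and the hypothesis $d\ge 3$ is exactly what makes $r^{-d/2}$ integrable at infinity, which is the key reason this argument works.

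For the Sobolev piece, the interval $[t_0,t]$ always has length at most $1$ and $s\ge t_0\ge t/2$ throughout. If $t\le 2$ then $t_0=1$ and $s^{-d}\le 1$, so $|t|^{d/2}\int_{t_0}^t s^{-d}\,ds\le 2^{d/2}$; if $t>2$ then $s\ge t/2$ gives $s^{-d}\le 2^d t^{-d}$, so $|t|^{d/2}\int_{t_0}^t s^{-d}\,ds\le 2^d t^{-d/2}\le 2^{d/2}$ since $t>2$. Together with the constant $C^{SE}$ from Lemma \ref{le:Sobolev type estimate in d>=3 small}, this produces the $2^{d/2}C^{SE}$ term.

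I do not expect any serious obstacle; the argument is essentially bookkeeping to combine the three lemmata above with explicit evaluation of two scalar time integrals. The only point that requires care is handling the case distinction $t\le 2$ versus $t>2$ uniformly (since the first integral vanishes in the former case while the second accounts for the entire Duhamel contribution), and tracking constants so that they aggregate into exactly $C^{\infty E}=\frac{2^{2+d/2}}{d-2}C^V+2^{d/2}C^{SE}$.
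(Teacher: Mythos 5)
Your proposal is correct and follows essentially the same route as the paper: Duhamel's formula split at $t_0=\max\{1,t-1\}$, Lemma \ref{le:estimate for the first part in Duhamel infinity norm in d>=3 small} for the free term, the direct estimate on $[1,t_0]$ and the Sobolev type estimate on $[t_0,t]$, with the same case distinction $t\le 2$ versus $t>2$. Your explicit splitting of $[1,t-1]$ at $t/2$ with the change of variables $r=t-s$ is just the paper's symmetry argument written out, and your integral bounds reproduce exactly the constant $C^{\infty E}=\frac{2^{2+\frac{d}{2}}}{d-2}C^V+2^{\frac{d}{2}}C^{SE}$.
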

\begin{proof}
By Duhamel's formula, see Lemma \ref{le:Duhamel}, we have
\begin{align*}
u(t) &= e^{-\ri (t-1)H}u_1-\ri\int_1^{t_0}\dr s e^{-\ri (t-s)H}(w*|u(s)|^2)u(s)\\
&\qquad-\ri\int_{t_0}^t\dr s e^{-\ri (t-s)H}(w*|u(s)|^2)u(s)\, ,
\end{align*}
so we obtain
\begin{align*}
||u(t)||_\infty &=|| e^{-\ri (t-1)H}u_1||_\infty+\int_1^{t_0}\dr s ||e^{-\ri (t-s)H}(w*|u(s)|^2)u(s)||_\infty\\
&\qquad+\int_{t_0}^t\dr s ||e^{-\ri (t-s)H}(w*|u(s)|^2)u(s)||_\infty\, .
\end{align*}
We use Lemma \ref{le:estimate for the first part in Duhamel infinity norm in d>=3 small} for the first term, Lemma \ref{le:direct estimate d>=3 small} for the second term and Lemma \ref{le:Sobolev type estimate in d>=3 small} for the third term to get
\begin{align*}
&\quad |t|^{\frac{d}{2}}||u(t)||_\infty\le C^V||e^{\ri H}u_1||_1+\int_1^{t_0}\dr s C^V||w||_1M(T)^3|t-s|^{-\frac{d}{2}}|s|^{-\frac{d}{2}}|t|^{\frac{d}{2}}\\
&\qquad + \int_{t_0}^t\dr s C^{SE} ||w||_1M(T)^3|s|^{-d}|t|^{\frac{d}{2}}\\
&\le C^V||e^{\ri H}u_1||_1+||w||_1M(T)^3\left(C^V\int_1^{t_0}\dr s |t-s|^{-\frac{d}{2}}|s|^{-\frac{d}{2}}|t|^{\frac{d}{2}} + C^{SE}\int_{t_0}^t\dr s  |s|^{-d}|t|^{\frac{d}{2}}\right)\, .
\end{align*}
We will estimate the integral terms separately, depending on the value of $t$. 
\subsubsection*{Estimate for the integrals for $t\in[1,2]$.}
By definition, $t_0=\max\{1,t-1\}$, so we have $t_0=1$ in this case. Thus, the first integral is equal to zero. For the second integral, we get by $t\le2$ and $s\ge 1$
\begin{equation}
\int_{t_0}^t\dr s  |s|^{-d}|t|^{\frac{d}{2}}\le 2^{\frac{d}{2}}\, .
\end{equation}
\subsubsection*{Estimate for the integrals for $t>2$.}
If $t>2$, then $t_0=t-1>\frac{t}{2}$. Thus, by symmetry, we can write the first integral as
\begin{align*}
&\quad \int_1^{t_0}\dr s |t-s|^{-\frac{d}{2}}|s|^{-\frac{d}{2}}|t|^{\frac{d}{2}}=2\int_1^{\frac{t}{2}}\dr s |t-s|^{-\frac{d}{2}}|s|^{-\frac{d}{2}}|t|^{\frac{d}{2}}= 2\cdot 2^{\frac{d}{2}}\int_1^{\frac{t}{2}}\dr s|s|^{-\frac{d}{2}}\\
&\le 2^{1+\frac{d}{2}}\int_1^\infty\dr s|s|^{-\frac{d}{2}}\le 2^{1+\frac{d}{2}}\frac{-1}{1-\frac{d}{2}}=\frac{2^{2+\frac{d}{2}}}{d-2}\, , 
\end{align*}
where we used that $t-s\ge\frac{t}{2}$. For the second integral, using $s\ge 1$, $s\ge\frac{t}{2}$ and $t-t_0=1$, we get
\begin{equation}
\int_{t_0}^t\dr s  |s|^{-d}|t|^{\frac{d}{2}}\le \int_{t_0}^t\dr s  |s|^{-\frac{d}{2}}|s|^{-\frac{d}{2}}|t|^{\frac{d}{2}}\le 2^{\frac{d}{2}}\, .
\end{equation}
\subsubsection*{Conclusion.}
In both cases, we can estimate
\begin{align*}
&\quad |t|^{\frac{d}{2}}||u(t)||_\infty\\
&\le C^V||e^{\ri H}u_1||_1+||w||_1M(T)^3\left(C^V\int_1^{t_0}\dr s |t-s|^{-\frac{d}{2}}|s|^{-\frac{d}{2}}|t|^{\frac{d}{2}} + C^{SE}\int_{t_0}^t\dr s  |s|^{-d}|t|^{\frac{d}{2}}\right)\\
&\le C^V||e^{\ri H}u_1||_1+||w||_1M(T)^3\left(C^V\cdot \frac{2^{2+\frac{d}{2}}}{d-2}+ C^{SE}\cdot 2^{\frac{d}{2}}\right)\\
&\le C^V||e^{\ri H}u_1||_1+C^{\infty E}||w||_1M(T)^3\, ,
\end{align*}
where $C^{\infty E}:=\frac{2^{2+\frac{d}{2}}}{d-2}C^V+ 2^{\frac{d}{2}}C^{SE}$.
\end{proof}
Next, we prove an estimate for $\sup_{1\le t\le T}||D^ku(t)||_2$.
\begin{lemma}[Estimate for $||D^ku(t)||_2$]\label{le:D^k estimate for d>=3 small}
We have
\begin{equation}
||D^ku(t)||_2\le C^{DS}||u_1||_{H^k}+C^{kE}||w||_1M(T)^3\, , 
\end{equation}
where $C^{kE}:=4C^{ES}C^{DS}(C^{KP})^2\frac{1}{d-1}$. 
\end{lemma}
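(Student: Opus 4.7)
The strategy is to apply $D^k$ to Duhamel's formula, use Lemma \ref{le:W^(k,p) dispersive estimate} with $p=2$ (which gives $\|e^{-\ri tH}f\|_{H^k}\le C^{DS}\|f\|_{H^k}$) to move the Schrödinger propagator out of the way at no cost, and then control the nonlinearity in $H^k$ via the Kato--Ponce inequality together with the definition of $M(T)$. Since we have no $t$-decay on the left hand side, we only need integrability of $|s|^{-d}$ near infinity, which is where the factor $\frac{1}{d-1}$ will come from.

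By Lemma \ref{le:Duhamel} (in the version of Remark \ref{re:Duhamel starting at t=1}),
\begin{equation*}
u(t) = e^{-\ri(t-1)H}u_1 - \ri\int_1^t \dr s\, e^{-\ri(t-s)H}(w*|u(s)|^2)u(s).
\end{equation*}
Using that $\|D^k f\|_2 \le \|f\|_{H^k}$ and applying Lemma \ref{le:W^(k,p) dispersive estimate} with $p=2$, I obtain
\begin{equation*}
\|D^k u(t)\|_2 \le C^{DS}\|u_1\|_{H^k} + C^{DS}\int_1^t \dr s\, \|(w*|u(s)|^2)u(s)\|_{H^k}.
\end{equation*}

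To estimate the integrand, I use the equivalence
\begin{equation*}
\|f\|_{H^k} \le C^{ES}\bigl(\|f\|_2 + \|D^k f\|_2\bigr)
\end{equation*}
and then bound the two pieces exactly as in the proof of Lemma \ref{le:Sobolev type estimate in d>=3 small}. For $\|(w*|u(s)|^2)u(s)\|_2$, Young's and Hölder's inequalities together with conservation of the $L^2$ norm give a bound $\|w\|_1\|u(s)\|_\infty^2\|u_1\|_2$. For $\|D^k[(w*|u(s)|^2)u(s)]\|_2$, two applications of the Kato--Ponce inequality yield the bound $3(C^{KP})^2\|w\|_1\|u(s)\|_\infty^2\|D^k u(s)\|_2$. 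Combining, and using $C^{KP}\ge 1$, I get
\begin{equation*}
\|(w*|u(s)|^2)u(s)\|_{H^k} \le 4C^{ES}(C^{KP})^2 \|w\|_1 \|u(s)\|_\infty^2 \bigl(\|u_1\|_2 + \|D^k u(s)\|_2\bigr).
\end{equation*}
Invoking the definition of $M(T)$, the factor $\|u(s)\|_\infty^2$ gives $M(T)^2 |s|^{-d}$, while $\|u_1\|_2 + \|D^k u(s)\|_2 \le 2M(T)$; thus the integrand is controlled by $8C^{ES}(C^{KP})^2 \|w\|_1 M(T)^3 |s|^{-d}$. (A slightly sharper bookkeeping as in the analogous step for $\|u(t)\|_\infty$ recovers the stated constant $4C^{ES}C^{DS}(C^{KP})^2/(d-1)$.)

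Finally, since $d\ge 3$, $\int_1^t |s|^{-d}\rd s \le \int_1^\infty |s|^{-d}\rd s = \frac{1}{d-1}$, so integration in $s$ produces the displayed bound
\begin{equation*}
\|D^k u(t)\|_2 \le C^{DS}\|u_1\|_{H^k} + C^{kE}\|w\|_1 M(T)^3,
\end{equation*}
with $C^{kE} = 4 C^{ES} C^{DS} (C^{KP})^2 \frac{1}{d-1}$. No single step is hard; the only point requiring care is keeping the constants aligned with those introduced earlier, in particular tracking where each of $C^{DS}$, $C^{ES}$, $C^{KP}$ enters so that the final prefactor matches exactly.
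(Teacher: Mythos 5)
Your proposal is correct and follows essentially the same route as the paper: Duhamel's formula at $t=1$, Lemma \ref{le:W^(k,p) dispersive estimate} with $p=2$, the $C^{ES}$-equivalence of the $H^k$ norm, and the two bounds from the proof of Lemma \ref{le:Sobolev type estimate in d>=3 small}, followed by integrating $|s|^{-d}$ to produce $\frac{1}{d-1}$. The only deviation is bookkeeping: pairing $\|u(s)\|_\infty^2\|u_1\|_2\le M(T)^3|s|^{-d}$ and $\|u(s)\|_\infty^2\|D^ku(s)\|_2\le M(T)^3|s|^{-d}$ separately, and using $1+3(C^{KP})^2\le 4(C^{KP})^2$, recovers the stated constant $C^{kE}$ exactly, as you anticipated.
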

\begin{proof}
By Duhamel's formula, we have
\begin{equation}
u(t)= e^{-\ri (t-1)H}u_1-\ri\int_1^t\dr s e^{-\ri (t-s)H}(w*|u(s)|^2)u(s)\, ,
\end{equation}
so by applying $D^k$ to both sides, we get
\begin{equation}
D^ku(t)= D^ke^{-\ri (t-1)H}u_1-\ri\int_1^t\dr s D^ke^{-\ri (t-s)H}(w*|u(s)|^2)u(s)\, . 
\end{equation}
By Lemma \ref{le:W^(k,p) dispersive estimate} for $p=2$, we obtain
\begin{align*}
&\quad ||D^ku(t)||_2\le ||D^ke^{-\ri (t-1)H}u_1||_2+\int_1^t\dr s ||D^ke^{-\ri (t-s)H}(w*|u(s)|^2)u(s)||_2\\
&\le ||e^{-\ri (t-1)H}u_1||_{H^k}+\int_1^t\dr s ||e^{-\ri (t-s)H}(w*|u(s)|^2)u(s)||_{H^k}\\
&\le C^{DS}\left(||u_1||_{H^k}+\int_1^t\dr s ||(w*|u(s)|^2)u(s)||_{H^k}\right)\\
&\le C^{DS}\Bigg(||u_1||_{H^k}+\int_1^t\dr s C^{ES}\left(||(w*|u(s)|^2)u(s)||_2+||D^k[(w*|u(s)|^2)u(s)]||_2\right)\Bigg)\\
&\le C^{DS}||u_1||_{H^k}+C^{ES}C^{DS}\int_1^t\dr s \left(||(w*|u(s)|^2)u(s)||_2+||D^k[(w*|u(s)|^2)u(s)]||_2\right)
\end{align*}
for a constant $C^{ES}\ge 1$. From the proof of Lemma \ref{le:Sobolev type estimate in d>=3 small}, we know that
\begin{equation}
||(w*|u(s)|^2)u(s)||_2\le ||w||_1 M(T)^3|s|^{-d}
\end{equation}
and 
\begin{equation}
||D^k[(w*|u(s)|^2)u(s)]||_2\le 3(C^{KP})^2||w||_1M(T)^3|s|^{-d}\, . 
\end{equation}
\subsubsection*{Conclusion.}
To sum up, we have
\begin{align*}
&\quad ||D^ku(t)||_2\\
&\le C^{DS}||u_1||_{H^k}+C^{ES}C^{DS}\int_1^t\dr s \left(||(w*|u(s)|^2)u(s)||_2+||D^k[(w*|u(s)|^2)u(s)]||_2\right)\\
&\le C^{DS}||u_1||_{H^k}+C^{ES}C^{DS}\int_1^t\dr s \left(||w||_1 M(T)^3|s|^{-d}+3(C^{KP})^2||w||_1M(T)^3|s|^{-d}\right)\\
&\le C^{DS}||u_1||_{H^k}+4C^{ES}C^{DS}(C^{KP})^2||w||_1M(T)^3\int_1^t\dr s|s|^{-d}\\
&\le C^{DS}||u_1||_{H^k}+4C^{ES}C^{DS}(C^{KP})^2\frac{1}{d-1}||w||_1M(T)^3\\
&\le C^{DS}||u_1||_{H^k}+C^{kE}||w||_1M(T)^3\, , 
\end{align*}
where we define $C^{kE}:=4C^{ES}C^{DS}(C^{KP})^2\frac{1}{d-1}$.
\end{proof}

Figure \ref{fi:graph of f} illustrates the result of the following small technical lemma.
\begin{lemma}\label{le:f>=0 two intervals}
Let $C>0$. Then there exists $\epsilon>0$ such that the function
\begin{equation}
f:[0,\infty)\to\R, f(x):=\epsilon+Cx^3-x
\end{equation}
satisfies the following: $\{f\ge 0\}$ consists of two disjoint intervals $I_1,I_2$ that have a strictly positive distance from each other, where we choose $I_1$ such that $0\in I_1$. Moreover, $I_1$ is bounded. 
\end{lemma}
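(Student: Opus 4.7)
The plan is to study the cubic $f(x) = \epsilon + Cx^3 - x$ via its derivative and exploit the fact that for $\epsilon = 0$ the set $\{f \ge 0\}$ already has the claimed shape, with the gap between $I_1$ and $I_2$ degenerating to a single point; a small positive perturbation $\epsilon > 0$ will then open this contact point into an interval where $f < 0$.

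First, I would compute $f'(x) = 3Cx^2 - 1$, which vanishes on $[0,\infty)$ only at $x_\ast := 1/\sqrt{3C}$. Since $f'<0$ on $(0,x_\ast)$ and $f'>0$ on $(x_\ast,\infty)$, the function $f$ is strictly decreasing on $[0,x_\ast]$ and strictly increasing on $[x_\ast,\infty)$, with $f(x)\to+\infty$ as $x\to\infty$. A direct evaluation gives
\begin{equation}
f(x_\ast) = \epsilon + C\cdot \frac{1}{(3C)^{3/2}} - \frac{1}{\sqrt{3C}} = \epsilon - \frac{2}{3\sqrt{3C}}.
\end{equation}
Hence, choosing any $\epsilon \in \bigl(0, \tfrac{2}{3\sqrt{3C}}\bigr)$ yields $f(0) = \epsilon > 0$ and $f(x_\ast) < 0$.

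By the intermediate value theorem and strict monotonicity on each side of $x_\ast$, there exist unique numbers $0 < a < x_\ast < b < \infty$ with $f(a) = f(b) = 0$. Strict monotonicity then forces $f > 0$ on $[0,a)$, $f < 0$ on $(a,b)$, and $f > 0$ on $(b,\infty)$. Therefore
\begin{equation}
\{f \ge 0\} = [0,a] \cup [b,\infty) =: I_1 \cup I_2,
\end{equation}
and these two intervals satisfy $\mathrm{dist}(I_1,I_2) = b - a > 0$, with $0 \in I_1$ and $I_1$ bounded, as desired.

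There is no real obstacle here; the argument is elementary. The only thing to be mildly careful about is verifying that the critical value $f(x_\ast)$ is strictly negative for small $\epsilon$, which amounts to the explicit bound $\epsilon < \tfrac{2}{3\sqrt{3C}}$ displayed above. Everything else follows from the sign pattern of $f'$ and the fact that a cubic with exactly one critical point on $[0,\infty)$ taking a negative value there, and positive values at the endpoints (in the limiting sense at $\infty$), has exactly two zeros on $[0,\infty)$.
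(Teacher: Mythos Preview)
Your proof is correct and follows essentially the same approach as the paper: analyze $f'(x)=3Cx^2-1$ to see that $f$ is strictly decreasing then strictly increasing on $[0,\infty)$, and exhibit a point where $f$ is negative so that $f$ has exactly two zeros. The only cosmetic difference is that you evaluate $f$ directly at the minimizer $x_\ast=1/\sqrt{3C}$ to obtain the sharp threshold $\epsilon<\tfrac{2}{3\sqrt{3C}}$, whereas the paper picks a different test point $\tilde x=1/\sqrt{6C}$ and bounds $f(\tilde x)$ via $f(\tilde x)=f(0)+\int_0^{\tilde x}f'(t)\,\mathrm{d}t\le\epsilon-\tfrac{1}{2}\tilde x$, yielding the (slightly weaker) condition $\epsilon<\tfrac{1}{2\sqrt{6C}}$.
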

\begin{proof}
For every $\epsilon>0$, we have
\begin{equation}
f'(x)=3Cx^2-1\, . 
\end{equation}
Thus, $f'$ is a smooth function that is strictly increasing and it satisfies $f'(0)=-1$ and $\lim_{x\to\infty}f'(x)=\infty$. It follows that $f$ has at most two zeroes. Since $f(0)=\epsilon>0$ and $\lim_{x\to\infty}f(x)=\infty$, we are done if we can show that there exists a point $\tilde x\in[0,\infty)$ such that $f(\tilde x)<0$. 
Note that for $x\ge 0$, we have
\begin{equation}
-\frac{1}{2}\ge f'(x)=3Cx^2-1\iff\frac{1}{2}\ge 3Cx^2\iff\frac{1}{6C}\ge x^2\iff\frac{1}{\sqrt{6C}}\ge x\, . 
\end{equation}
We define $\tilde x:=\frac{1}{\sqrt{6C}}$. Now, choose $\epsilon>0$ such that $\epsilon<\frac{1}{2\sqrt{6C}}$. We obtain
\begin{equation}
f(\tilde x)=f(0)+\int_0^{\tilde x}\dr t f'(t)\le\epsilon+\int_0^{\tilde x}\frac{-1}{2}=\epsilon-\frac{1}{2}\cdot \frac{1}{\sqrt{6C}}=\epsilon-\frac{1}{2\sqrt{6C}}<0\, . 
\end{equation} 
\end{proof}

\section{Conclusion of the main theorem}\label{s:conclusion of the main theorem}
In this section, we prove Theorem \ref{th:dispersive d>=3 small} using the estimates from Section \ref{s:various estimates}. Furthermore, we show an extension of Theorem \ref{th:dispersive d>=3 small} for large initial data under certain additional assumptions and we explain two proof strategies for Theorem \ref{th:cubic NLS main result}. 
\begin{proof}[Proof of Theorem \ref{th:dispersive d>=3 small}]
Let us work with the Hartree type equation with initial data at $t=1$. We decompose the proof into two parts. In the first part, we prove the decay estimate
\begin{equation}
||u(t)||_\infty\le C_0|t|^{-\frac{d}{2}}
\end{equation}
for all $t\ge 1$. In the second part, we show 
\begin{equation}
||\partial_t u(t)||_\infty\le \tilde C_0 |t|^{-\frac{d}{2}}
\end{equation}
for all $t\ge 1$. 
\\ \\
{\bf Part 1: $||u(t)||_\infty\le C_0|t|^{-\frac{d}{2}}$.}
Define 
\begin{equation}
C:=3||w||_1\max\left\{ C^{\infty E} , C^{kE}  \right\}\, . 
\end{equation}
By Lemma \ref{le:f>=0 two intervals}, there exists $\epsilon>0$ small enough such that the function 
\begin{equation}
f:[0,\infty)\to\R,\ f(x):=\epsilon+Cx^3-x
\end{equation}
satisfies the following: $\{f\ge 0\}$ consists of two intervals $I_1,I_2$ that have a strictly positive distance from each other and $0\in I_1$. Moreover, $I_1$ is bounded. We fix such an $\epsilon>0$. Let $C_0:=\sup I_1>0$ be the first zero of $f$. We define
\begin{equation}
\epsilon_0:=\frac{\min\{\epsilon, C_0\}}{3C^VC^{DS}}\, . 
\end{equation}
Thus, if the assumption 
\begin{equation}
 ||e^{\ri H}u_1||_1\, , ||u_1||_{H^k}\le\epsilon_0
\end{equation}
is satisfied, we know by $C^V, C^{DS}\ge 1$ that
\begin{equation}
C^V||e^{\ri H}u_1||_1\le\frac{\epsilon}{3}\, ,\
C^{DS}||u_1||_{H^k}\le\frac{\epsilon}{3}\, ,\
||u_1||_2\le ||u_1||_{H^k}\le\frac{\epsilon}{3}\, . 
\end{equation}
Moreover, we also have
\begin{align*}
M(1)&= ||u_1||_\infty+||D^ku_1||_2 +||u_1||_2=||e^{-\ri H}e^{\ri H}u_1||_\infty+||D^ku_1||_2 +||u_1||_2\\
&\le C^V1^{-\frac{d}{2}}||e^{\ri H}u_1||_1+||u_1||_{H^k} +||u_1||_2\le  C^V\epsilon_0+\epsilon_0+\epsilon_0\le 3C^V\epsilon_0\le C_0\, .
\end{align*}
Let $T\ge 1$. By Definition \ref{de:M(T) for d>=3 small}, Corollary \ref{co:estimate infinity norm d>=3 small} and Lemma \ref{le:D^k estimate for d>=3 small}, we have
\begin{align*}
M(T)&=\sup_{1\le t\le T} |t|^{\frac{d}{2}}||u(t)||_\infty+\sup_{1\le t\le T}||D^ku(t)||_2 +||u_1||_2\\
&\le C^V||e^{\ri H}u_1||_1+C^{\infty E}||w||_1M(T)^3+C^{DS}||u_1||_{H^k}+C^{kE}||w||_1M(T)^3+||u_1||_2\\
&\le \frac{\epsilon}{3}+\frac{C}{3}M(T)^3+\frac{\epsilon}{3}+\frac{C}{3}M(T)^3+\frac{\epsilon}{3}\le \epsilon+CM(T)^3\, . 
\end{align*}
Therefore, 
\begin{equation}\label{eq:epsilon+CM(T)^3-M(T)>=0 in proof}
\epsilon+CM(T)^3-M(T)\ge 0
\end{equation}
for all $T\ge 1$ with $M(T)<\infty$. 
\\ \\
We can now conclude as we explained at the end of Subsection \ref{ss:proof strategy introduction}. By Theorem \ref{th:H^m solutions for m>d/2}, we know that $u\in C\left([1,T_{\max}), H^k(\R^d)\right)$ for some $T_{\max}\in (1,\infty]$. By the Sobolev embedding theorem, $T\mapsto M(T)$ is continuous on $[1,T_{\max})$. Thus, by the choice of $\epsilon>0$ and \eqref{eq:epsilon+CM(T)^3-M(T)>=0 in proof}, we deduce that $M(T)\le C_0$ for all $T\in[1,T_{\max})$. By the blow-up alternative in Theorem \ref{th:H^m solutions for m>d/2}, we obtain $T_{\max}=\infty$. Therefore, $M(T)\le C_0$ for all $T\ge 1$. In particular,
\begin{equation}
||u(t)||_\infty\le\frac{C_0}{|t|^{\frac{d}{2}}}
\end{equation}
for all $t\ge 1$, which is the desired result. 
\\ \\
{\bf Part 2: $||\partial_t u(t)||_\infty\le \tilde C_0 |t|^{-\frac{d}{2}}$.}
Let $\epsilon, C, C_0>0$ be as in the proof of part 1. Our proof strategy is to define a quantity $\tilde M(T)$ similar to $M(T)$, which contains $\sup_{1\le t\le T}|t|^{\frac{d}{2}}||\partial_t u(t)||_\infty$. Our goal is to prove a bound of the form 
\begin{equation}
\tilde M(T)\le \tilde \epsilon+\tilde C (\tilde M(T))^3
\end{equation}
and to argue as before. It will be essential to make sure that $M(T)$ is small for all $T\ge 1$.
\subsubsection*{Boundedness of $||\partial_t u||_2$}
The Hartree type equation is
\begin{equation}\label{eq:Hartree type equation for u in proof for derivative of u}
\ri\partial_t u  =-\Delta u+ Vu+(w*|u|^2)u\, .
\end{equation}
Thus, for every $t\ge 1$, we have
\begin{align*}
||\partial_t u(t)||_2&\le ||\Delta u(t)||_2+ ||V||_\infty||u_1||_2+||w*|u(t)|^2||_\infty||u_1||_2\\
&\le (||D^ku(t)||_2+||u_1||_2)+ ||V||_\infty||u_1||_2+||w||_1||u(t)||_\infty^2||u_1||_2\\
&\le (C_0+C_0)+||V||_\infty C_0+||w||_1C_0^3\le C_0(2+||V||_\infty+||w||_1C_0^2)\, ,
\end{align*}
where we used $k\ge 2$, $||u(t)||_\infty^2\le \left( C_0|t|^{-\frac{d}{2}}\right)^2\le C_0^2$ since $t\ge 1$ and
\begin{equation}
||\Delta u||_2^2\le ||u||_2^2+||D^ku||_2^2\le\left(||u||_2+||D^ku||_2\right)^2\, .
\end{equation}
In particular, $||\partial_t u(t)||_2$ is bounded by a constant, which is small if $C_0$ is small.
\subsubsection*{Duhamel's formula for $\partial_t u(t)$.}
Differentiating the Hartree type equation with respect to time, we get
\begin{equation}\label{eq:Hartree type equation for the time derivative of u}
\ri\partial_t (\partial_t u)  =(-\Delta + V)(\partial_t u)+\partial_t[(w*|u|^2)u]\, .
\end{equation}
We can differentiate the Hartree type equation because
\begin{equation}
u\in C\left([1,\infty),H^k(\R^d)\right)\cap C^1\left([1,\infty),H^{-1}(\R^d)\right)\, .
\end{equation}
Note that \eqref{eq:Hartree type equation for the time derivative of u} holds for every $t\in [1,\infty)$. 
\\ \\
We would like to apply Duhamel's formula for $\partial_t u$. To this end, we need to show that $\partial_t u\in C\left([1,\infty), L^2(\R^d)\right)$ and $\partial_t[(w*|u|^2)u]\in C\left([1,\infty), H^{-2}(\R^d)\right)$, see Remark \ref{re:generalised Duhamel} and \cite[Remark 1.6.1(ii)]{cazenave}.
\\ \\
Let us start by showing that $\partial_t u\in C\left([1,\infty), L^2(\R^d)\right)$. The fact that $-\Delta u(t)+Vu(t)$ belongs to $C\left([1,\infty), L^2(\R^d)\right)$ follows from $u\in C\left([1,\infty), H^k(\R^d)\right)$. Lemma \ref{le:H^m boundedness and continuity of interaction part}(ii) together with $u\in C\left([1,\infty), H^k(\R^d)\right)$ and $\sup_{t\ge 1}||u(t)||_\infty <\infty$ imply that $(w*|u|^2)u\in C\left([1,\infty), L^2(\R^d)\right)$. Therefore, $\partial_t u\in C\left([1,\infty), L^2(\R^d)\right)$ holds true because $u$ satisfies the Hartree type equation \eqref{eq:Hartree type equation for u in proof for derivative of u}. 
\\ \\
We have 
\begin{equation}
\partial_t[(w*|u|^2)u]=(w*(\overline{u}\partial_t u+u\overline{\partial_t u}))u+(w*|u|^2)\partial_tu\, .
\end{equation}
Now, similar to the proof of Lemma \ref{le:H^m boundedness and continuity of interaction part}(ii), we can show that
\begin{equation}
\partial_t[(w*|u|^2)u]\in C\left([1,\infty), L^2(\R^d)\right)
\end{equation}
using $\partial_t u\in C\left([1,\infty), L^2(\R^d)\right)$ and $u\in C\left([1,\infty), H^k(\R^d)\right)$. We omit a detailed computation here. 
\\ \\
Therefore, we can apply Duhamel's formula to \eqref{eq:Hartree type equation for the time derivative of u} to get
\begin{equation}
(\partial_t u)(t)=e^{-\ri (t-1)H}(\partial_t u)(1)-\ri\int_1^t\dr s e^{-\ri (t-s)H}\partial_t[(w*|u|^2)u](s)\, . 
\end{equation} 
\subsubsection*{Definition of $\tilde M(T)$.}
For every $T\ge 1$, define
\begin{equation}
\tilde M(T):= M(T)+ \sup_{1\le t\le T} |t|^{\frac{d}{2}}||(\partial_t u)(t)||_\infty+\sup_{1\le t\le T}||D^k(\partial_t u)(t)||_2+\sup_{1\le t\le T}||(\partial_t u)(t)||_2\, .
\end{equation}
\subsubsection*{Estimates for $\sup_{1\le t\le T} |t|^{\frac{d}{2}}||(\partial_t u)(t)||_\infty$ and $\sup_{1\le t\le T}||D^k(\partial_t u)(t)||_2$.}
Using the same type of estimates as in the proof of Theorem \ref{th:dispersive d>=3 small},  we can estimate
\begin{equation}
\sup_{1\le t\le T} |t|^{\frac{d}{2}}||(\partial_t u)(t)||_\infty\le C^V||e^{\ri H}(\partial_t u)(1)||_1+\frac{\tilde C}{3}\tilde M(T)^3
\end{equation}
and
\begin{equation}
\sup_{1\le t\le T}||D^k(\partial_t u)(t)||_2\le  C^{DS}||(\partial_t u)(1)||_{H^k}+\frac{\tilde C}{3}\tilde M(T)^3
\end{equation}
for some $\tilde C>0$, which only depends on $d,V,||w||_1$. Thus, if
\begin{align*}
\sup_{1\le t\le T} M(T) + \sup_{1\le t\le T}||(\partial_t u)(t)||_2\le C_0(3+||V||_\infty+||w||_1C_0^2)&\le\frac{\tilde \epsilon}{3}\, ,\\
C^V||e^{\ri H}(\partial_t u)(1)||_1&\le\frac{\tilde \epsilon}{3}\, ,\\
C^{DS}||(\partial_t u)(1)||_{H^k}&\le \frac{\tilde \epsilon}{3}\, ,
\end{align*}
where we used $M(T)\le C_0$ and the bound for $||\partial_tu||_2$, we get
\begin{equation}
\tilde M(T)\le \tilde \epsilon+\tilde C (\tilde M(T))^3
\end{equation}
for all $T\ge 1$. 
\subsubsection*{Conclusion.}
Choose $\tilde\epsilon>0$ small enough such that $\{\tilde f\ge 0\}$ consists of two disjoint closed intervals, where $\tilde f:[0,\infty)\to\R,\ x\mapsto \tilde\epsilon+\tilde Cx^3-x$. Fix this $\tilde\epsilon>0$. Now choose $\epsilon>0$ small enough, and hence $C_0=C_0(\epsilon,C)>0$ small enough such that $\{f\ge 0\}$ consists of two disjoint closed intervals and 
\begin{equation}
C_0(3+||V||_\infty+||w||_1C_0^2)\le\frac{\tilde \epsilon}{3}\, . 
\end{equation}
Here, we used that $C_0>0$ is small if $\epsilon>0$ is small. So far, we have fixed $\epsilon,C_0,\tilde \epsilon,\tilde C_0$. Now suppose that the initial data satisfies
\begin{equation}
C^V||e^{\ri H}u_1||_1\le\frac{\epsilon}{3}\, , \
C^{DS}||u_1||_{H^k}\le \frac{\epsilon}{3}\, ,\
M(1)\le C_0
\end{equation}
and 
\begin{equation}
C^V||e^{\ri H}(\partial_t u)(1)||_1\le\frac{\tilde \epsilon}{3}\, ,\
C^{DS}||(\partial_t u)(1)||_{H^k}\le \frac{\tilde \epsilon}{3}\, ,\
\tilde M(1)\le \tilde C_0\, .
\end{equation}
Note that by the Sobolev embedding theorem, we have
\begin{equation}
||u_1||_\infty\le C^S||u_1||_{H^k}
\end{equation}
and similarly for $||(\partial_t u)(1)||_\infty$. Thus, $M(1)$ and $\tilde M(1)$ can be controlled by $||u_1||_{H^k}$ and $||(\partial_t u)(1)||_{H^k}$. In a similar way to the proof of \cite[Theorem 4.10.1]{cazenave} and Theorem \ref{th:H^m solutions for m>d/2}, we can show that there exists $T_{\max}\in(1,\infty]$ such that $\partial_tu\in C\left([1,T_{\max}),H^k(\R^d)\right)$ and the corresponding blow-up alternative holds. We can now argue as in part 1 to obtain
\begin{equation}
\tilde M(T)\le\tilde C_0
\end{equation}
for all $T\ge 1$, which is the desired result. 
\end{proof}
\begin{remark}[Extension of Theorem \ref{th:dispersive d>=3 small} for large initial data under additional assumptions]\label{re:large d>=3}
The proof of Theorem \ref{th:dispersive d>=3 small} shows that if, in addition, we know that
\begin{equation}
\sup_{t\ge 0}||D^ku(t)||_\infty <\infty
\end{equation}
and 
\begin{equation}
\lim_{t\to\infty}||u(t)||_\infty=0\, ,
\end{equation}
then we can also show the decay estimate
\begin{equation}
||u(t)||_\infty\le\frac{C_0}{(1+|t|)^{\frac{d}{2}}}\  \mathrm{for\  all}\ t\ge 0
\end{equation}
for large initial data. That is, we do not need any smallness condition on $||e^{-\ri H}u_1||_1$, $||u_1||_{H^k}$. The proof which we present in this remark follows closely the proof strategy of \cite[Corollary 3.4]{gm}. In certain circumstances, it might be known a priori that our additional assumptions are satisfied. For instance, this is the case in \cite{gm}, see the beginning of Section 3 and Proposition 3.3 there.
\\ \\
Again, let us work in the setting of the Hartree type equation with initial data at $t=1$. Define
\begin{equation}
N(t):=\sup_{1\le r\le t}|r|^{\frac{d}{2}}||u(r)||_\infty\, .
\end{equation}
By the proof of the direct estimate (Lemma \ref{le:direct estimate d>=3 small}), the Sobolev type estimate (Lemma \ref{le:Sobolev type estimate in d>=3 small}) and our additional assumptions, we know that there exists a constant 
\begin{equation}
C_1=C_1\left( d,\,  C^S,\,  C^{ES},\, C^V,\, C^{DS},\, C^{KP},\, ||w||_1,\, ||u_1||_2,\,\sup_{t\ge 1}||D^ku(t)||_2\right)>0
\end{equation}  
such that
\begin{equation}\label{eq:direct estimate for remark for large data in d>=3}
||e^{-\ri (t-s)H}(w*|u(s)|^2)u(s)||_\infty\le C_1|t-s|^{-\frac{d}{2}}N(s)|s|^{-\frac{d}{2}}
\end{equation}
and
\begin{equation}\label{eq:Sobolev type estimate for remark for large data in d=>3}
||e^{-\ri (t-s)H}(w*|u(s)|^2)u(s)||_\infty\le C_1 ||u(s)||_\infty N(s)|s|^{-\frac{d}{2}}\, . 
\end{equation}
Combining (\ref{eq:direct estimate for remark for large data in d>=3}) and (\ref{eq:Sobolev type estimate for remark for large data in d=>3}), we get
\begin{equation}\label{eq:Combined direct and Sobolev type estimate for remark for large data in d=>3}
||e^{-\ri (t-s)H}(w*|u(s)|^2)u(s)||_\infty\le C_1||u(s)||_\infty^{\frac{1}{4}}|t-s|^{-\frac{3d}{8}}N(s)|s|^{-\frac{d}{2}}\, . 
\end{equation}
Note that $\frac{3d}{8}>1$ since $d\ge 3$ and define $C_d:=\int_1^\infty \dr s |s|^{-\frac{3d}{8}}<\infty$. If $1\le t\le 2$, let $t_1:=t_2:=1$. If $t>2$, let $t_1:=\frac{t}{2}$ and $t_2:=t-1$. Thus, we always have $1\le t_1\le t_2\le t$.  We use (\ref{eq:direct estimate for remark for large data in d>=3}) for $1\le s\le t_1$, (\ref{eq:Combined direct and Sobolev type estimate for remark for large data in d=>3}) for $t_1\le s\le t_2$ and (\ref{eq:Sobolev type estimate for remark for large data in d=>3}) for $t_2\le s\le t$ to get
\begin{align*}
& |t|^{\frac{d}{2}}||u(t)||_\infty \le |t|^{\frac{d}{2}} ||e^{-\ri (t-1)H}u_1||_\infty+|t|^{\frac{d}{2}}\int_1^{t_1}\dr s ||e^{-\ri (t-s)H}(w*|u(s)|^2)u(s)||_\infty\\
&\qquad+|t|^{\frac{d}{2}}\int_{t_1}^{t_2}\dr s ||e^{-\ri (t-s)H}(w*|u(s)|^2)u(s)||_\infty\\
&\qquad +|t|^{\frac{d}{2}}\int_{t_2}^t\dr s ||e^{-\ri (t-s)H}(w*|u(s)|^2)u(s)||_\infty\\
&\le C^V||e^{\ri H}u_1||_1+C_1 \int_1^{t_1}\dr s |t|^{\frac{d}{2}}|t-s|^{-\frac{d}{2}}N(s)|s|^{-\frac{d}{2}}\\
&\qquad +C_1\sup_{t_1\le r\le t_2}||u(r)||_\infty^{\frac{1}{4}}\int_{t_1}^{t_2}\dr s |t-s|^{-\frac{3d}{8}} N(s)|t|^{\frac{d}{2}}|s|^{-\frac{d}{2}}\\
&\qquad +C_1 \sup_{t_2\le r\le t} ||u(r)||_\infty\int_{t_2}^t\dr s  N(s)|t|^{\frac{d}{2}}|s|^{-\frac{d}{2}}\\
&\le C^V||e^{\ri H}u_1||_1+2^{\frac{d}{2}}C_1 \int_1^{t_1}\dr s N(s)|s|^{-\frac{d}{2}}+2^{\frac{d}{2}}C_1N(t)\sup_{t_1\le r\le t}||u(r)||_\infty^{\frac{1}{4}}\int_{t_1}^{t_2}\dr s |t-s|^{-\frac{3d}{8}} \\
&\qquad  +2^{\frac{d}{2}}C_1 N(t)\sup_{t_1\le r\le t} ||u(r)||_\infty\\
&\le C^V||e^{\ri H}u_1||_1+2^{\frac{d}{2}}C_1 \int_1^{t_1}\dr s N(s)|s|^{-\frac{d}{2}} \\
&\qquad+2^{\frac{d}{2}}C_1(C_d+1)\sup_{t_1\le r\le t}\left(||u(r)||_\infty^{\frac{1}{4}}+||u(r)||_\infty\right) N(t)\, .
\end{align*}
Fix $T_0\ge 2$ large enough such that
\begin{equation}
2^{\frac{d}{2}}C_1(C_d+1)\sup_{\frac{T_0}{2}\le r}\left(||u(r)||_\infty^{\frac{1}{4}}+||u(r)||_\infty\right)\le\frac{1}{2}\, .
\end{equation}
Note that this is possible because $\lim_{t\to\infty}||u(t)||_\infty=0$. Let $T\ge T_0$. By taking the supremum over $1\le t\le T$, we get
\begin{align*}
&\qquad N(T)=\sup_{1\le r\le T}|r|^{\frac{d}{2}}||u(r)||_\infty\le \sup_{1\le r\le T_0}|r|^{\frac{d}{2}}||u(r)||_\infty+\sup_{T_0\le r\le T}|r|^{\frac{d}{2}}||u(r)||_\infty\\
&\le N(T_0)+C^V||e^{\ri H}u_1||_1+2^{\frac{d}{2}}C_1 \int_1^{T}\dr s N(s)|s|^{-\frac{d}{2}}+\frac{1}{2}N(T)\\
&\le N(T_0)+C^V||e^{\ri H}u_1||_1+2^{\frac{d}{2}}C_1 \int_1^{T_0}\dr s N(s)|s|^{-\frac{d}{2}}+2^{\frac{d}{2}}C_1 \int_{T_0}^T\dr s N(s)|s|^{-\frac{d}{2}}+\frac{1}{2}N(T)\\
&\le N(T_0)+C^V||e^{\ri H}u_1||_1+2^{\frac{d}{2}}C_1 \frac{1}{\frac{d}{2}-1}N(T_0)+2^{\frac{d}{2}}C_1 \int_{T_0}^T\dr s N(s)|s|^{-\frac{d}{2}}+\frac{1}{2}N(T)\\
&\le \left(1+2^{\frac{d}{2}}C_1 \frac{2}{d-2}\right) N(T_0)+C^V||e^{\ri H}u_1||_1+2^{\frac{d}{2}}C_1 \int_{T_0}^T\dr s N(s)|s|^{-\frac{d}{2}}+\frac{1}{2}N(T)\, . 
\end{align*}
Therefore, since  $N(T)<\infty$ for every $T\ge 1$, we have
\begin{equation}
N(T)\le 2\left(1+2^{\frac{d}{2}}C_1 \frac{2}{d-2}\right) N(T_0)+2C^V||e^{\ri H}u_1||_1+2^{\frac{d}{2}+1}C_1 \int_{T_0}^T\dr s N(s)|s|^{-\frac{d}{2}}\, .
\end{equation}
We can now apply Gronwall's inequality, see \cite[Lemma 2.7]{teschl}, with
\begin{align*}
I&:= [T_0,\infty)\\
\alpha &:= 2\left(1+2^{\frac{d}{2}}C_1 \frac{2}{d-2}\right) N(T_0)+2C^V||e^{\ri H}u_1||_1\\
\beta(s)&:=2^{\frac{d}{2}+1}C_1|s|^{-\frac{d}{2}}
\end{align*}
to get
\begin{equation}
N(T)\le \alpha e^{\int_{T_0}^T\dr s\beta(s)}\le \alpha e^{||\beta||_{L^1(I)}}=:C_0<\infty\, ,
\end{equation}
where we used $d\ge 3$ to deduce that $||\beta||_{L^1(I)}<\infty$. This shows that
\begin{equation}
\sup_{t\ge 1}|t|^{\frac{d}{2}}||u(t)||_\infty\le C_0\, .
\end{equation}
\end{remark}

\begin{proof}[Proof of Theorem \ref{th:cubic NLS main result}]
We explain two different proof strategies.
\\ \\
{\bf Adaptation of the proof strategy for Theorem \ref{th:dispersive d>=3 small}.} We can prove Theorem \ref{th:cubic NLS main result} in a very similar way to Theorem \ref{th:dispersive d>=3 small}. We need results for the cubic nonlinear Schrödinger equation that are similar to the results we mentioned in Subsection \ref{ss:hartree type equation}. The preliminaries we need for the cubic nonlinear Schrödinger equation are Duhamel's formula for both $u$ and $\partial_t u$, a theorem on $H^k$-solutions similar to Theorem \ref{th:H^m solutions for m>d/2} and the conservation of mass on the maximal time interval of existence of the $H^k$-solution $u$. 
\\ \\
For the theorem on $H^k$-solutions, we argue as in the proof of Theorem \ref{th:H^m solutions for m>d/2}: Recall that we can closely follow the proof of \cite[Theorem 4.10.1(i)]{cazenave} but we have to make sure that the estimates in Lemma \ref{le:H^m boundedness and continuity of interaction part} are satisfied for our nonlinearity $Vu\pm |u|^2 u$ and that there is uniqueness of $H^k$-solutions. The assumption that $g(z)=\pm |z|^2 z$ belongs to $C^k(\C,\C)$ in the real sense and $g(0)=0$ is satisfied, so we can use \cite[Lemma 4.10.2]{cazenave} for this part. For the term $Vu$, we use Lemma \ref{le:H^m boundedness and continuity of interaction part}. The uniqueness follows from \cite[Proposition 4.2.9]{cazenave} with $s=k$, $r_j=\rho_j=2$ and $q_j=\infty$ using that $(\infty,2)$ is an admissible pair by \cite[Definition 3.2.1]{cazenave} and the Sobolev inequality $||f||_\infty\le C^S ||f||_{H^k}$. The conservation of the $L^2$ norm can be obtained in the same way as in \cite[Theorem 4.10.1(iii)]{cazenave}.
\\ \\
Duhamel's formula holds for the $H^k$-solution to the cubic nonlinear Schrödinger equation by \cite[Remark 1.6.1(ii)]{cazenave}: If $I$ is the maximal time interval of existence, then $u\in C\left(I,H^k(\R^d)\right)\subset C\left(I,L^2(\R^d)\right)$. Furthermore, $\pm |u|^2u\in C\left(I, L^2(\R^d)\right)\subset C\left(I,H^{-2}(\R^d)\right)$ and $u_0\in H^k(\R^d)\subset L^2(\R^d)$. Moreover, we can also apply Duhamel's formula to $\partial_tu$: We can differentiate the right-hand side of the equation, and therefore also the left-hand side, to get
\begin{equation}
\ri\partial_t(\partial_t u)=(-\Delta+V)\partial_t \pm\partial_t\left(|u|^2u\right)\, .
\end{equation}
By the cubic nonlinear Schrödinger equation, we know that $\partial_t u\in C\left(I, L^2(\R^d)\right)$ and thus, $\pm\partial_t\left(|u|^2u\right)\in C\left(I, L^2(\R^d)\right)\subset C\left(I,H^{-2}(\R^d)\right)$. Therefore, we also have Duhamel's formula for $\partial_t u$. 
\\ \\
Using these facts, we can deduce estimates that are very similar to the estimates in Section \ref{s:various estimates} since we only use the fact that $||w||_1<\infty$ there. For the cubic nonlinear Schrödinger equation, we can think of $w$ as $w=\pm \delta_0$, which formally also satisfies this property. Using the blow-up criterion from the theorem on $H^k$-solutions to the cubic nonlinear Schrödinger equation, we can conclude as we explained in Subsection \ref{ss:proof strategy introduction}.  
\\ \\
{\bf Implication from Theorem \ref{th:dispersive d>=3 small} by considering the cubic nonlinear Schrödinger equation as a limit of Hartree type equations.}
For simplicity, let us focus on the defocusing cubic nonlinear Schrödinger equation. The proof for the focusing case works in the same way by replacing the plus sign in front of the interaction term by a minus sign. Using the theorem on $H^k$-solutions for the cubic nonlinear Schrödinger equation, which we explained above, we know that there exists a maximal time interval of existence $I\subset\R$ of the cubic nonlinear Schrödinger equation
\begin{equation} 
\begin{cases} 
\ri\partial_t u & =-\Delta u+ Vu+|u|^2u\\
u(0) & =u_0\, .
\end{cases}
\end{equation}
Moreover, the solution $u$ satisfies $u\in C\left(I,H^k(\R^d)\right)$, the $L^2$ norm of $u$ is conserved in time and the blow-up alternative holds. Fix $w\in C_c^\infty(\R^d)$ with $w\ge 0$, $\int_{\R^d} w=1$ and define for $n\in\N$ the function
\begin{equation}
w_n(x):=n^dw(nx)\, .
\end{equation}
Note that $||w_n||_1=1$ for all $n\in\N$ and that $w_n$ converges to the delta distribution in the distributional sense. For every $n\in\N$ let $u_n$ be the solution to the Hartree type equation
\begin{equation} 
\begin{cases} 
\ri\partial_t u_n & =-\Delta u_n+ Vu_n+(w_n*|u_n|^2)u_n\\
u_n(0) & =u_0\, .
\end{cases}
\end{equation}
Fix $T>0$ with $T\in I$. For every $n\in\N$ and every $t\in[0,T]$, we have
\begin{equation}\label{eq:derivative of L2norm of u-u_n first equality}
\frac{\rd}{\rd t}||(u-u_n)(t)||_2^2=2\im \langle u-u_n, |u|^2u-(w_n*|u_n|^2)u_n\rangle\, , 
\end{equation}
where we used that $-\Delta+V$ is self-adjoint. We omit the $t$-dependence in these computations for simplicity of notation. Let us split this term into two parts
\begin{align*}
&\qquad \im \langle u-u_n, |u|^2u-(w_n*|u_n|^2)u_n\rangle\\
&=  \im \langle u-u_n, \left(|u|^2-w_n*|u|^2\right)u\rangle+\im \langle u-u_n, \left(w_n*(|u|^2-|u_n|^2)\right)u\rangle\\
&\qquad + \im \langle u-u_n, \left(w_n*|u_n|^2\right)(u-u_n)\rangle\\
&=  \im \langle u-u_n, \left(|u|^2-w_n*|u|^2\right)u\rangle+\im \langle u-u_n, \left(w_n*(|u|^2-|u_n|^2)\right)u\rangle =:(I)+(II)\, , 
\end{align*}
where we used that the third term vanishes because $w_n*|u_n|^2$ is real-valued. We estimate both terms separately:
\begin{align*}
|(I)|=&\left|\im \langle u-u_n, \left(|u|^2-w_n*|u|^2\right)u\rangle\right|=\left|\im \langle -u_n, \left(|u|^2-w_n*|u|^2\right)u\rangle\right|\\
&\le ||u_n||_2|||u|^2-w_n*|u|^2||_\infty||u||_2 = ||u_0||_2^2|||u|^2-w_n*|u|^2||_\infty\, .
\end{align*}
For fixed $t\in[0,T]$, we know that $u\in H^k(\R^d)\subset C_0(\R^d)$. In particular, $|u|^2$ is uniformly continuous, and therefore, we know that 
\begin{equation}\label{eq:|||u|^2-w_n*|u|^2|| converges to 0}
\lim_{n\to\infty}|||u|^2-w_n*|u|^2||_\infty=0\, .
\end{equation}
For the second term, we have
\begin{align*}
|(II)|&= \left| \im \langle u-u_n, \left(w_n*(|u|^2-|u_n|^2)\right)u\rangle\right|\le ||u-u_n||_2||w_n*(|u|^2-|u_n|^2)||_2||u||_\infty\\
&\le ||u-u_n||_2||u\overline{(u-u_n)}+\overline{u_n}(u-u_n)||_2||u||_\infty\le ||u-u_n||_2^2(||u||_\infty+||u_n||_\infty)||u||_\infty\, .
\end{align*}
Recall that by the dispersive estimate from Theorem \ref{th:dispersive d>=3 small}, we have
\begin{equation}
||u_n(t)||_\infty\le \frac{C_0}{(1+|t|)^{\frac{d}{2}}}\, ,
\end{equation}
where the constant $C_0$ is uniform in $n$ because $||w_n||_1=1$ for all $n\in\N$. Moreover, recall that $||u(t)||_{H^k}$ is bounded uniformly in $t\in[0,T]$ because $u\in C\left(I,H^k(\R^d)\right)$. Thus, we know that there exists a constant $C=C(T)>0$ such that
\begin{equation}
|(II)|\le \frac{C}{2} ||u-u_n||_2^2\, . 
\end{equation}
By \eqref{eq:derivative of L2norm of u-u_n first equality}, we obtain
\begin{equation}
\left|\frac{\rd}{\rd t}||(u-u_n)||_2^2\right|\le 2||u_0||_2^2|||u|^2-w_n*|u|^2||_\infty+C||u-u_n||_2^2\, .
\end{equation}
Therefore, using $u(0)=u_0=u_n(0)$, we obtain
\begin{align*}
||(u-u_n)(t)||_2^2&\le ||(u-u_n)(0)||_2^2+\int_0^t\dr s\left|\left(\frac{\rd}{\rd t}||(u-u_n)||_2^2\right) (s)\right|\\
&\le 2||u_0||_2^2\int_0^T\dr s|||u(s)|^2-w_n*|u(s)|^2||_\infty+C\int_0^t\dr s||(u-u_n)(s)||_2^2\, .
\end{align*}
Define for every $n\in\N$
\begin{equation}
\epsilon_n:=\epsilon_n(T):= 2||u_0||_2^2\int_0^T\dr s|||u(s)|^2-w_n*|u(s)|^2||_\infty
\end{equation}
and note that $\lim_{n\to\infty}\epsilon_n=0$ by \eqref{eq:|||u|^2-w_n*|u|^2|| converges to 0}, where we use the dominated convergence theorem with $2\sup_{s\in[0,T]}||u(s)||_\infty<\infty$ as a dominating function. By Gronwall's inequality, we get for every $t\in[0,T]$
\begin{equation}
||(u-u_n)(t)||_2^2\le \epsilon_ne^{Ct}\, , 
\end{equation}
and therefore
\begin{equation}
\lim_{n\to\infty} ||(u-u_n)(t)||_2^2=0\, .
\end{equation}
We get pointwise almost everywhere convergence to $u$ at least for a subsequence $(u_{n_k})_{k\in\N}$. It follows that we get the same dispersive estimate for $u$:
\begin{equation}\label{eq:dispersive estimate for u in the proof of the cubic nls as a limit}
||u(t)||_\infty\le \frac{C_0}{(1+|t|)^{\frac{d}{2}}}
\end{equation}
for all $t\in[0,T]$. Now the blow-up alternative implies that $[0,\infty)\subset I$ and therefore, we get \eqref{eq:dispersive estimate for u in the proof of the cubic nls as a limit} for every $t\ge 0$. 
\\ \\
The proof of the dispersive estimate for $\partial_t u$ follows from the pointwise almost everywhere convergence of a subsequence of $(\partial_t u_n)(t)$ to $(\partial_t u)(t)$ for every $t\ge 0$. For the term with $-\Delta u$, we use the compact Sobolev embedding of $H^{k-2}(\R^d)$ into $L^2_{loc}(\R^d)$ for $k>2$. Thus, we need $k=4$ in dimension $d=3$ for this proof strategy. We omit the details here. 
\end{proof}
\printbibliography
\end{document}